\def\showdraftbox{0}
\def\showcolorlinks{1}
\newtheorem{theorem}{Theorem}[section]
\newtheorem*{theorem*}{Theorem}
\newtheorem{proposition}[theorem]{Proposition}
\newtheorem*{proposition*}{Proposition}
\newtheorem{lemma}[theorem]{Lemma}
\newtheorem*{lemma*}{Lemma}
\newtheorem*{conjecture*}{Conjecture}
\newtheorem*{fact*}{Fact}
\newtheorem*{hypothesis*}{Hypothesis}
\theoremstyle{definition}
\newtheorem{definition}[theorem]{Definition}
\newtheorem{question}[theorem]{Question}
\theoremstyle{remark}
\newtheorem{claim}[theorem]{Claim}
\newtheorem*{claim*}{Claim}
\newtheorem{remark}[theorem]{Remark}
\newtheorem*{remark*}{Remark}
\newtheorem*{observation*}{Observation}
\newcommand{\savehyperref}[2]{\texorpdfstring{\hyperref[#1]{#2}}{#2}}
\newcommand{\problemmacro}[1]{\texorpdfstring{\textsc{#1}}{#1}\xspace}
\newcommand{\draftbox}{\begin{center}
  \fbox{%
    \begin{minipage}{2in}%
      \begin{center}%
%        \begin{Large}%
          \Large\textsc{Working Draft}\\%
%        \end{Large}\\
        Please do not distribute%
      \end{center}%
    \end{minipage}%
  }%
\end{center}
\vspace{0.2cm}}
\newcommand{\draftbox}{}
\title{Finding Perfect Matchings in Bipartite Hypergraphs}
\author{Chidambaram Annamalai\thanks{Department of Computer Science,
    ETH Zurich. Email:
\href{mailto:cannamalai@inf.ethz.ch}{cannamalai@inf.ethz.ch}. \newline Work performed while the author was at the School of Basic Sciences, EPFL.}}
\date{\today}
\begin{document}

\maketitle
\draftbox
\thispagestyle{empty}

\begin{abstract}
  Haxell's condition~\cite{haxell1995condition} is a natural \emph{hypergraph} analog of Hall's condition, which is a well-known necessary and sufficient condition for a bipartite graph to admit a perfect matching. That is, when Haxell's condition holds it forces the existence of a perfect matching in the bipartite hypergraph. Unlike in graphs, however, there is no known polynomial time algorithm to find the hypergraph perfect matching that is guaranteed to exist when Haxell's condition is satisfied.

We prove the existence of an efficient algorithm to find perfect matchings in bipartite hypergraphs whenever a stronger version of Haxell's condition holds. Our algorithm can be seen as a generalization of the classical \emph{Hungarian algorithm} for finding perfect matchings in bipartite graphs. The techniques we use to achieve this result could be of use more generally in other combinatorial problems on hypergraphs where disjointness structure is crucial, e.g. \problemmacro{Set Packing}.

%%% Local Variables:
%%% mode: latex
%%% TeX-master: "polytime"
%%% End:

\end{abstract}

\medskip
\noindent
{\small \textbf{Keywords:}
bipartite hypergraphs, matchings, local search algorithms.
}

\newpage
\section{Introduction}
Bipartite matchings are a ubiquitous quantity across science and
engineering. The task of finding a maximum matching (or, specifically,
a perfect matching) in a bipartite graph captures a fundamental notion
of assignment that has turned out to have wide applicability. One
reason their influence has been felt so deeply is essentially
\emph{computational}. The basic fact that maximum matchings in
bipartite graphs can be found efficiently is arguably the most
important factor in determining their widespread use. Confirming the
importance of this problem, decades of research in theoretical
computer science has contributed to increasingly faster and more
sophisticated algorithms for finding maximum matchings in bipartite
graphs. These include connections to other fundamental problems like
matrix
multiplication~\cite{lovasz1979determinants,mucha2004maximum}. Most
recently, the exciting work of M\c{a}dry~\cite{madry2013navigating} breaks
the decades-old $O(m\sqrt{n})$
Hopcroft-Karp-Karzanov~\cite{hopcroft1973n,karzanov1973nakhozhdenii}
barrier for finding maximum matchings in bipartite graphs.

In this paper we look at the analogous problem in the hypergraph
setting. We address the following question: Do there exist
efficient---in the sense of \emph{polynomial} running time---algorithms
to find perfect matchings in bipartite \emph{hypergraphs}?

In an $r$-uniform bipartite hypergraph the vertex set is partitioned
into two sets $A$ and $B$ such that each edge contains exactly one
vertex from $A$ and $r-1$ vertices from $B$. A perfect matching is a
collection of disjoint edges such that each vertex in $A$ is covered
by exactly one edge in the collection.

In order for the question to make any sense at all we need to impose
some additional restrictions on the input as any such algorithm that
works unconditionally even for the case $r=3$ is tantamount to P$=$NP,
as the trivial reduction from \problemmacro{$3$-Dimensional
  Matching}\footnote{For a definition of this NP-complete problem see, for example,~\cite{cygan2013improved}.} shows. In this sense it is not surprising that the question of
\emph{finding} perfect matchings in bipartite hypergraphs was not
considered before. However, as we will see, under certain conditions
the problem becomes interesting algorithmically. The starting point
for such investigations is to ask ourselves if there is a condition
similar to Hall's~\cite{hall1935representatives} condition that
guarantees the existence of perfect matchings in bipartite
hypergraphs. This question was solved in a very satisfying way by
Haxell~\cite{haxell1995condition} in the mid-90s leading to a striking
generalization of Hall's theorem. The condition is that for every
subset $S$ of $A$, the size of the hitting set of hyperedges incident
to $S$ must be proportional to the size of $S$. This condition
is sufficient to force the existence of a perfect matching in the
hypergraph. More formally, given a bipartite hypergraph $H=(A,B,E)$,
for a set $S \subseteq A$ let
$E_S := \{ e\in E \;|\; |e \cap S| = 1\} $ be the set of hyperedges of
$H$ incident to $S$. For a given collection of edges $F \subseteq E$,
define $\tau(F)$ to be the smallest cardinality subset of $B$
that hits\footnote{A subset $S \subseteq B$ \emph{hits} all the edges
  in $F$ if for each $e \in F$, $S \cap e \neq \emptyset$.} all the
edges in $F$.

\begin{theorem}[Haxell~\cite{haxell1995condition}]\label{thm:haxell}
  Let $H = (A,B,E)$ be an $r$-uniform bipartite hypergraph. If
  \[\tau(E_S) > (2r-3)(|S|-1) \quad \forall S \subseteq A,\] then $H$ admits a perfect matching.
\end{theorem}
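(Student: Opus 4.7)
I would prove the theorem by contradiction through an iterative local-search / alternating-tree argument, mirroring the way augmenting paths underlie the classical bipartite-matching algorithm. Let $M$ be a maximum matching in $H$, and assume for contradiction that $M$ is not perfect; pick $a_0 \in A$ uncovered by $M$. The goal is either to construct a larger matching (contradicting maximality) or to exhibit a set $S \subseteq A$ and a small hitting set for $E_S$ that violates the hypothesis $\tau(E_S) > (2r-3)(|S|-1)$.

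The construction grows, step by step, a rooted alternating structure anchored at $a_0$. At step $t$ I maintain a set $S_t = \{a_0, a_1, \ldots, a_t\} \subseteq A$, a tentative hyperedge $e_i \in E$ with $a_i \in e_i$ for each $i \leq t$, and the matching edge $f_i \in M$ with $a_i \in f_i$ for each $i \geq 1$ (it exists because every $a_i$ with $i \geq 1$ is chosen among matched vertices). I also maintain a \emph{blocking set} $T_t \subseteq B$, which collects the $B$-vertices forced on me by prior choices. The invariant is that $|T_t| \leq (2r-3)(|S_t|-1)$: each newly admitted $a_{i}$ ($i \geq 1$) arrives together with the matching edge $f_i$, which contributes at most $r-2$ fresh $B$-vertices (one $B$-vertex of $f_i$ is the one that forced $a_i$ into $S$, and is hence already in $T$), and a new tentative edge $e_i$ contributes at most another $r-1$, for a total of $2r-3$ per step.

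At each step, because $|T_t| \leq (2r-3)(|S_t|-1) < \tau(E_{S_t})$, Haxell's assumption guarantees an edge $e^* \in E_{S_t}$ whose $B$-vertices are not entirely inside $T_t$. Two cases drive the iteration. If the newly found $e^*$ is also disjoint from the $M$-covered vertices outside $T_t$, then I can trace back through the tree and perform an alternating exchange: replace a subset of the matching edges $f_i$ by the corresponding tentative edges $e_i$, ending with $e^*$, to obtain a matching strictly larger than $M$ — contradicting maximality. Otherwise $e^*$ must use a $B$-vertex matched by some $f \in M \setminus \{f_1, \ldots, f_t\}$; take the $A$-endpoint of $f$ as the new $a_{t+1}$, set $e_{t+1} := e^*$, $f_{t+1} := f$, and continue. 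Since $S_t$ strictly grows and $|A|$ is finite, this process must eventually hit the augmenting case.

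The main obstacle is the augmentation step itself: hyperedges can share multiple $B$-vertices, so a careless swap might produce overlapping edges rather than a valid matching. To get around this, I would insist on a \emph{minimality} condition on the sequence $(e_i, f_i)$ — for instance, choosing $e_i$ so that the total overlap of its $B$-part with $T$ is minimal, and selecting at each augmentation step a shortest chain of indices back to the root $a_0$. A case analysis then shows that any obstruction to augmenting yields an alternative chain that either contradicts minimality or produces a hitting set of $E_{S_t}$ of size at most $(2r-3)(|S_t|-1)$, again violating Haxell's condition. This bookkeeping, rather than the iteration itself, is what forces the factor $2r-3$ in the hypothesis: the extra slack beyond the graph-theoretic $r-1 = 1$ absorbs the unavoidable overlaps introduced when hyperedges and matching edges share several $B$-vertices.
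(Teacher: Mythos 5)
Your high-level plan --- grow an alternating structure rooted at the uncovered vertex $a_0$, repeatedly invoke the Haxell bound to find a fresh hyperedge avoiding the accumulated blocked $B$-vertices, and keep the invariant $|T_t|\le(2r-3)(|S_t|-1)$ so that the hypothesis always produces something new --- is indeed the engine behind Haxell's argument, and your per-step count of $(r-1)+(r-2)=2r-3$ new $B$-vertices is the right bookkeeping. The proposal fails, though, exactly at the step you flag and then wave at with a ``minimality'' condition: the alternating exchange. A hyperedge can have up to $r-1$ distinct blockers in $M$, but your chain records only \emph{one} blocker per tentative edge: when $e^*$ is blocked you add a single $f$ and move on, and the remaining blockers of $e^*$ need never enter $T_t$ since nothing forces the process to revisit them. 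Consequently, when you later locate a free $e^*$ for some $a_j$ and attempt to replace the chain of matching edges back to $a_0$ by the corresponding tentative edges, an intermediate tentative edge $e_{j_i}$ can still be blocked by a matching edge that was never placed in the chain and is therefore not being removed; the proposed swap then does not produce a matching. Choosing $e_i$ to minimize overlap with $T$ does not help, because the offending blockers are precisely the ones whose $B$-vertices lie \emph{outside} $T$.

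Haxell's proof avoids this by working with a genuine tree that records \emph{all} blockers of every tentative edge, and --- crucially --- by never attempting a whole-chain exchange at once. When an unblocked leaf edge appears, only a single local swap is performed (exchange the leaf with its unique parent blocking edge), which keeps $|M|$ fixed and may or may not unblock something higher in the tree; $a_0$ becomes matched only after a potentially long sequence of such local swaps. This requires a separate, non-trivial termination argument that your proposal does not supply: the observation ``$S_t$ strictly grows, hence terminates'' only controls a monotone build-up, but after each local swap some matching edges leave the tree and part of the structure must be rebuilt, at which point the counting restarts. That termination argument is exactly the subtle part of the theorem, and, as the paper points out, the reason the proof is non-constructive (the number of swaps is not known to be subexponential). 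Your sketch thus captures the spirit of the counting but is missing both the tree structure needed to make the exchange well-defined and the termination argument needed to make the process finite.
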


There are several interesting aspects to
Theorem~\ref{thm:haxell}. First, it reduces to Hall's condition when
$r=2$. Second, the statement is ``tight'' in the sense that it is
\emph{not} true when the strict inequality is replaced by a non-strict
one, i.e., for every $r$ there is an $r$-uniform bipartite hypergraph
that satisfies Haxell's condition with non-strict inequalities and yet
contains no perfect matching. Finally, in constrast to the graph case,
the proof is not constructive and does not lead to an efficient
algorithm that also finds the perfect matching.

Besides being interesting objects in their own right, perfect
matchings in bipartite hypergraphs have become a crucial concept in
recent
work~\cite{bansal2006santa,feige2008allocations,haeupler2011new,asadpour2012santa,svensson2012santa,polacek2012quasi,DBLP:conf/soda/AnnamalaiKS15}
on a particular allocation problem, called \problemmacro{Restricted
  Max-Min Fair Allocation}, where the goal is to partition
\emph{indivisible} resources among players in a balanced manner. The
latest work that exploits this
connection~\cite{DBLP:conf/soda/AnnamalaiKS15} has further shown that
local search algorithms based on alternating trees for this problem
can be made to run in polynomial time. These developments brought to
light the question posed towards the beginning of this section. Is it
possible to efficiently find perfect matchings in bipartite
hypergraphs by assuming a stronger version of Haxell's condition?

The question was also raised implicitly by Haeupler, Saha and
Srinivasan~\cite{haeupler2011new} who write that ``Haxell's theorems
are again highly non-constructive \ldots.'' As a direction for future
research it was therefore asked in~\cite{DBLP:conf/soda/AnnamalaiKS15} if
ensuring the much stronger condition
\begin{equation}\label{eqn:stronger}
\tau(E_S) \geq 100r|S| \quad \forall S \subseteq A
\end{equation}
was sufficient to also find the perfect matching efficiently. The
large constant before $r$ reflected the belief that some loss is to be
expected, at least by using their techniques. The allocation algorithm
underlying their work turned out to yield a weaker guarantee under
strengthenings similar to \eqref{eqn:stronger}. In particular
their techniques yield the following theorem.

\begin{theorem}[\cite{DBLP:conf/soda/AnnamalaiKS15}]\label{thm:weak}
  Let $c > 0$ be some absolute constant. Choose any
  $0 < \epsilon \leq 1$, $r \geq 2$ and consider $r$-uniform bipartite
  hypergraphs $H=(A,B,E)$ that satisfy
  $\tau(E_S) \geq (c/\epsilon) \cdot r(|S|-1).$ For such a family of
  hypergraphs there is a polynomial time algorithm that assigns one
  hyperedge $e_a \in E$ for every vertex $a \in A$ such that it is
  possible to choose disjoint subsets
  $\{S_a \subseteq e_a \cap B\}_{a\in A}$ of cardinality at least
  $(1-\epsilon)(r-1)$.
\end{theorem}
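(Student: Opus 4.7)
The plan is to design a local-search algorithm that maintains an assignment $\{e_a\}_{a\in A}$, one hyperedge per vertex of $A$ with $a \in e_a$, and iteratively improves it until disjoint subsets $\{S_a \subseteq e_a \cap B\}_{a\in A}$ of size $(1-\epsilon)(r-1)$ exist. Given a candidate assignment, the existence of such subsets can be tested efficiently via a bipartite (graph) matching computation: build a bipartite graph with $(1-\epsilon)(r-1)$ copies of each $a \in A$ on the left, $B$ on the right, and an edge from every copy of $a$ to each vertex of $e_a \cap B$; the desired disjoint subsets exist exactly when the left side is perfectly matched. If the matching fails, Hall's theorem hands us a witness $T \subseteq A$ whose combined ``available $B$-neighborhood'' is too small to absorb the required copies, and this witness will seed the improvement step.

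From such a witness I would grow an alternating tree in the spirit of Haxell's original proof of \prettyref{thm:haxell}. The tree alternates between two kinds of objects: \emph{addable} hyperedges, i.e., candidate replacements $e' \in E_{\{a\}}$ for some bad $a$ whose intersection with $B$ is essentially disjoint from vertices already locked in by other $S_{a'}$; and \emph{blockers}, i.e., currently-assigned hyperedges $e_{a'}$ that share enough $B$-vertices with a proposed addable edge that swapping in $e'$ would invalidate the ``goodness'' of $a'$. Each blocker's host $a'$ is then treated as a new bad player and its own addable edges are explored in the next layer. If some layer exhibits an addable edge that introduces no further blockers, one performs a chain of swaps along the tree, strictly improving the auxiliary matching. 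Otherwise the set $S$ of players accumulated in the tree witnesses $\tau(E_S) = O(r|S|/\epsilon)$---every hyperedge of $E_S$ is blocked by some vertex of $B$ already claimed by tree members or by the $\epsilon(r-1)$ slack allowance---which, for a sufficiently large constant $c$, contradicts the hypothesis $\tau(E_S)\ge (c/\epsilon)\,r(|S|-1)$. So a profitable swap is always available, establishing correctness.

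The main obstacle is polynomial-time termination: alternating-tree local search in hypergraphs is notorious for taking exponentially many iterations because blockers can accumulate faster than addable edges can be absorbed. To defuse this, I would adopt the layered, lazy-swap methodology developed in \cite{DBLP:conf/soda/AnnamalaiKS15}: at each layer the algorithm collects addable \emph{sets} of geometrically increasing size, and only executes a swap when some layer's addable edges are sufficiently conflict-free with the blockers in strictly lower layers. The slack $\epsilon(r-1)$ per hyperedge manifests as a multiplicative gap between the number of addable edges required to ``cover'' a layer and the number of blockers a single addition can create; this gap is exactly what makes the lazy swap condition reachable after only polynomially many layer-extensions. A potential function---for instance the reverse-lexicographic rank of the tuple of layer-sizes---can then be shown to strictly decrease with every swap, bounding the total number of iterations, and hence the overall running time, by a polynomial in $|A|+|B|+|E|$.
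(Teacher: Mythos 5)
Theorem~\ref{thm:weak} is imported into this paper as a black-box citation to \cite{DBLP:conf/soda/AnnamalaiKS15}; the paper itself gives no proof, only a descriptive summary of the cited techniques in the ``Context'' paragraph of Section~1. So your sketch is a reconstruction of an external result rather than something that can be checked against an in-paper proof. Measured against that summary, and against the paper's closely analogous algorithm for Theorem~\ref{thm:main} (Sections~3--5), you do capture the right high-level ingredients: alternating trees, lazy updates (commit swaps only in large batches), multiplicative growth in the number of blockers per layer giving logarithmic tree height, and a lexicographic potential on per-layer statistics.

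There are, however, two substantive gaps. First, you frame the algorithm as maintaining a \emph{complete} assignment $\{e_a\}_{a\in A}$ and improving it, using a bipartite-matching feasibility test to extract a Hall-deficiency witness $T\subseteq A$. The cited approach and the paper's own algorithm instead \emph{augment}: start from an empty (or valid partial) assignment and add one vertex at a time, rooting the alternating tree at the single unmatched vertex $a_0$. That invariant is load-bearing: every blocker in the tree belongs to a part of the assignment already known to be good, so swapping it is safe and progress-measuring. With an arbitrary complete assignment a blocker $e_{a'}$ may itself be bad (unable to carve out $(1-\epsilon)(r-1)$ disjoint $B$-vertices), the swap chain need not improve anything monotonically, and your Hall witness $T$ is a whole set rather than a single root, so the tree is not even anchored; your writeup does not address this. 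The clean fix is to drop the global feasibility test and simply run $|A|$ augmentations as in the paper. Second, the potential ``reverse-lexicographic rank of the tuple of layer-sizes'' is too imprecise to give a polynomial iteration bound: raw layer sizes take exponentially many values and layers appear and disappear. The paper's signature vector in Section~5 handles both issues via a logarithmic discretization $\lfloor \log_b(\cdot)\rfloor$ of scaled layer sizes plus a sentinel $\infty$, and it needs Proposition~\ref{lem:nondecreasing} and Lemma~\ref{lem:sigs} to argue that only polynomially many signatures occur. Without an analogous discretization and counting argument, a merely strictly-decreasing potential does not yield the claimed polynomial running time.
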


Whereas in the allocation setting sufficiently large subsets
$\{S_a\}_{a \in A}$ also correspond to a good allocation, in a
hypergraph setting Theorem~\ref{thm:weak} does not guarantee a
collection of valid and disjoint hyperedges unless $\epsilon < 1/r$.
However, for a choice of $\epsilon$ in latter range, the assumption
$\tau(E_S) \geq (c/\epsilon) \cdot r(|S|-1)$ is much stronger than the
one in Theorem~\ref{thm:haxell}.

\paragraph{Our results} Our main result is that a suitable
constructivization of Theorem~\ref{thm:haxell} is indeed possible. We
prove the following.

\begin{theorem}\label{thm:main}
  For every fixed choice of $\epsilon >0$ and $r \geq 2$, there exists
  an algorithm $\mathcal{A}(\epsilon, r)$ that finds, in time
  polynomial in the size of the input, a perfect matching in
  $r$-uniform biparite hypergraphs $H=(A,B,E)$ satisfying
  \[\tau(E_S) > (2r-3 + \epsilon)(|S|-1) \quad \forall S \subseteq
  A.\]
\end{theorem}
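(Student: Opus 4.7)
The plan is to generalize the Hungarian algorithm to bipartite hypergraphs by maintaining a partial matching $M$ and, whenever some vertex $a_0 \in A$ is uncovered, augmenting $M$ through an \emph{alternating tree} of hyperedges. Since a single augmenting path need not exist in the hypergraph setting, this tree alternates between two kinds of edges: ``addable'' edges that we would like to include in $M$, and ``blocking'' edges of $M$ that conflict on some $b \in B$ with a chosen addable edge. Each blocking edge uncovers an $A$-vertex, which then needs its own replacement addable edge, and so on recursively; a successful augmentation corresponds to a consistent ``collapse'' of the tree in which the swap of addable edges for blocking edges can actually be carried out.

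The backbone of the existence argument is the tree-growth step implicit in Haxell's proof of \prettyref{thm:haxell}. If $T \subseteq A$ is the set of $A$-vertices currently in the tree, and $Y \subseteq B$ is a hitting set of the blocking edges together with the $B$-vertices already touched by addable edges, then $|Y|$ can be bounded by a quantity of the form $(2r-3)(|T|-1)$. The hypothesis $\tau(E_T) > (2r-3)(|T|-1)$ then forces the existence of an edge $e \in E_T$ disjoint from $Y$, which may be appended as a new addable edge. Haxell's non-constructive proof repeats this growth until an addable edge with no blockers is produced, at which point an augmentation exists. The only obstruction to turning this into a polynomial-time algorithm is that the tree may grow exponentially between augmentations.

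To exploit the $\epsilon$-slack in $\tau(E_S) > (2r-3+\epsilon)(|S|-1)$, I would organize the tree growth into \emph{layers} of geometrically controlled target size, with the ratio between successive layer sizes governed by $\epsilon$. Each addable edge is assigned a definite layer, and a layer is declared ``complete'' once sufficiently many addable edges have been installed in it; the additional $\epsilon(|T|-1)$ slack in the hitting-set bound is precisely what lets the algorithm insist on these quotas while still invoking the Haxell growth step to find new edges. Two algorithmic primitives drive the process: tree growth, which adds addable edges that avoid all currently committed $B$-vertices, and local collapse, which opportunistically executes an augmentation whenever a subtree whose addable edges have no unresolved blockers is detected. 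Running alongside is a potential function that rewards completed layers and is maintained to strictly increase on each growth step and strictly decrease only through genuine progress toward an augmentation.

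The main obstacle, and where the $\epsilon$-slack does the decisive work, is to rule out indefinite oscillation between deepening the tree and performing partial collapses. One must show that after a local collapse the layer invariants can be efficiently restored with bounded loss, and that a collapse that recycles previously-blocking edges back into $M$ does not undo earlier layer completions. Bounding the total number of layers by a quantity depending only on $r$ and $\epsilon$, and thereby bounding the tree's size polynomially in $|A|$ and $|B|$, is the heart of the analysis; this is where one must argue most carefully, and where the approach diverges essentially from the weaker \prettyref{thm:weak} of \cite{DBLP:conf/soda/AnnamalaiKS15}, which can afford a much looser layering because of its $\Omega(r/\epsilon)$ slack.
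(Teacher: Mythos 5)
Your high-level plan is aligned with the paper's: maintain a layered alternating tree, exploit the $\epsilon$-slack in the hitting-set bound to force geometric growth in layer sizes, and track progress with a potential function. But there are two places where your sketch omits ideas that are not just details but load-bearing, and one place where the claim as stated is wrong.

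First, the missing idea: you never impose a \emph{degree bound} on the $A$-vertices inside the alternating tree. In the paper the algorithm caps each non-root $A$-vertex at $U = \lceil 1/\mu \rceil$ addable edges (with $\mu = \epsilon^2/(10r^2)$), and this cap is what makes the geometric layer-growth argument possible. Without such a cap a single $A$-vertex can spawn an unbounded number of addable edges in the next layer, so the number of \emph{blocking} edges per layer can fail to grow at all even when the number of addable edges grows; the Haxell counting then does not yield the multiplicative increase $|X_{i+1}| = \Omega(\epsilon/r^2)\cdot|Y_{\le i}|$ that drives the whole analysis. Your phrase ``geometrically controlled target size'' gestures at the desired conclusion but gives no mechanism for achieving it, and the natural uncapped growth simply does not provide it.

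Second, the incorrect claim: you state the goal as ``bounding the total number of layers by a quantity depending only on $r$ and $\epsilon$.'' That is too strong. The correct bound, and what the geometric growth gives, is $O(\log n)$ layers (with the constant depending on $r$ and $\epsilon$); a bound independent of $n$ would be false since the tree must eventually touch $\Theta(n)$ blocking edges. This matters because the termination argument is not ``constantly many layers, each of polynomial size,'' but rather a lexicographic potential (the paper's signature vector) over $O(\log n)$ coordinates, each taking $O(\log n)$ values, whose decrease is forced by \emph{lazy} collapses.

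Third, the laziness itself: the paper collapses a layer only when a $\mu$-fraction of its addable edges become immediately addable, and after a collapse it performs a ``superposed-build'' on the predecessor layer, committing the rebuild only if it grows by a $(1+\mu)$ factor. Both thresholds are exactly what makes the signature vector decrease strictly in the lexicographic order; collapsing eagerly ``whenever a subtree whose addable edges have no unresolved blockers is detected,'' as you propose, can cycle without decreasing any reasonable potential. These three points together are the heart of the argument, and your sketch as written does not reach them.
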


% We find it surprising that such a mild strengthening of Haxell's
% condition is sufficient to constructivize it, which is something that
% is not at all clear a priori.

Notice that such an algorithm with a polynomial running time
dependence on $1/\epsilon$ would be able to efficiently find perfect
matchings in bipartite graphs only assuming Haxell's
condition. Currently we see no way of achieving such a result. In
particular, our techniques make essential use of the $\epsilon$
strengthening of Haxell's condition as assumed in
Theorem~\ref{thm:main}. We emphasize that the running time of
$\mathcal{A}$ depends exponentially on $r$ and $1/\epsilon$, which
explains the particular order of the quantifiers in the statement. See
also Theorem~\ref{thm:maincorollary} for a slightly stronger corollary of
our main result.

An outline of the ideas behind Theorem~\ref{thm:main} requires setting
up some context involving previous work, which we do presently.

\paragraph{Context} It helps to start with the graph case. Here the
basic augmenting algorithm (often called the \emph{Hungarian
  algorithm} after K\H onig and
Egerv\'ary~\cite{west2001introduction}) takes a partial matching and
constructs an alternating tree of unmatched and matched edges with an
unmatched vertex at the root. For convenience we can imagine this tree
partitioned into ``layers'', where the $i$th layer contains all vertices
at distance $2i-1$ and $2i$ from the root (along with their associated
edges in the tree). Matched edges appearing in the tree can also be
called ``blocking'' since they prevent us from augmenting the partial
matching immediately. If a leaf of the alternating tree happens to be
an unmatched vertex (i.e., the leaf edge is not blocked by some edge
in the partial matching) then the corresponding root to leaf path is
an augmenting path for the considered matching and thus the augmenting
algorithm terminates. The fact that such a leaf always exists is
guaranteed by Hall's condition~\cite{hall1935representatives}. The
proof of Haxell's theorem (Theorem~\ref{thm:haxell}) involves a
similar alternating tree, the key difference being that a single
hyperedge in the alternating tree may now be blocked by \emph{several}
hyperedges (up to $r-1$) from the partial matching. Therefore, even if
a leaf hyperedge in the tree does not intersect any hyperedges from
the partial matching, we may not be able to immediately augment the
partial matching like in the graph case. What we can do is only swap
the corresponding blocking edge (the unique ancestor of the unblocked
leaf edge) with the leaf edge in the partial matching and
continue. When there are no longer leaf edges in the alternating tree,
the existence of a vertex disjoint hyperedge for one of the $A$
vertices in the tree is implied by Haxell's condition. This
alternating tree algorithm for hypergraph matchings by
Haxell~\cite{haxell1995condition}, which underlies
Theorem~\ref{thm:haxell}, is not known to make fewer than
exponentially many modifications (swapping operations) to the partial
matching before termination (at which point the root is matched).

To make such a local search algorithm
efficient~\cite{DBLP:conf/soda/AnnamalaiKS15} devised a similar but
different algorithm that ensures a (constant factor) multiplicative
increase in the number of blocking edges from layer to layer. This
guarantees that the height of the alternating tree is always
logarithmic, leading to ``short'' augmenting paths. They also avoid
making changes to the partial matching unless sufficiently many
changes can be made at once. In other words, the partial matching is
updated lazily. Coupled with other ideas, this leads to a polynomial
time combinatorial allocation algorithm that achieves their main
result. For the hypergraph setting, however, this only yields
Theorem~\ref{thm:weak}.

\paragraph{Our Techniques} One obstacle with the algorithm
of~\cite{DBLP:conf/soda/AnnamalaiKS15}, is that a single blocking edge
can block up to $r-1$ hyperedges in the same layer. This effect can
accumulate across consecutive layers preventing the desired growth in
the number of blocking edges across the layers of the alternating
tree, which we require in order to guarantee a logarithmic bound on
the height of the alternating tree. This makes it important to view
the structure of the blocking edges when the layers of the alternating
tree are constructed. On the other hand the problem with the regular
alternating tree algorithm for hypergraph
matchings~\cite{haxell1995condition} is that a single $A$ vertex in a
layer can be part of an unbounded number of hyperedges in the next
layer in the alternating tree. This skews any subsequent progress made
by the alternating tree algorithm vastly in favor of a few $A$
vertices in the previous layers. To avoid this we impose a degree
bound on the $A$ vertices in the alternating tree, making the progress
more balanced among $A$ vertices in the same layer. In
Section~\ref{section:analysis} we show, despite imposing this upper
bound, a multiplicative growth in the number of blocking edges from
layer to layer. Next, since the structure of the blocking edges was
considered when constructing a layer, this creates complications when
we modify the partial matching and some layer in the tree. For
example, when some blocking edges are removed by swapping operations
in a layer it may be possible to have additional hyperedges for some
of the $A$ vertices in the same layer. At this point our algorithm
performs a so-called ``superposed-build'' operation (see
Section~\ref{section:augmenting2}) on the layer to check if
sufficiently many new hyperedges can be included. If so, it commits
the changes, otherwise it ignores the newly available hyperedges.

\subsection{Related work}
Most relevant to the result of this paper is the line of work
concerning \emph{alternating tree} algorithms for hypergraph matchings
starting with the work of Haxell~\cite{haxell1995condition}. The
algorithmic question of whether the underlying local search algorithm
can be made efficient was not considered until the work of Asadpour,
Feige and Saberi~\cite{asadpour2012santa}. They uncovered a beautiful
connection to strong integrality gaps for configuration linear
programs for allocation problems. This direction was subsequently also
pursued by Svensson~\cite{svensson2012santa} leading to a
breakthrough in the context of scheduling. Both
results~\cite{asadpour2012santa,svensson2012santa} were
non-constructive and only proved integrality gap upper
bounds. Following these results it became an important question if
such approaches based on alternating trees can be turned into
efficient algorithms with similar guarantees. Pol\'a\v{c}ek and
Svensson~\cite{polacek2012quasi} obtained significant savings leading
to a quasipolynomial time alternating tree algorithm for
\problemmacro{Restricted Max-Min Fair Allocation}, but it is still not
clear if their approach can be made truly polynomial. Building on
these ideas, a polynomial time alternating tree algorithm for the same
problem was obtained by Annamalai, Kalaitzis and
Svensson~\cite{DBLP:conf/soda/AnnamalaiKS15}.

The success of local search for combinatorial problems on hypergraphs
where disjointness structure is crucial has been a recurring theme in
the literature on \problemmacro{$k$-Set
  Packing}~\cite{karp1972reducibility}. Hurkens and
Schrijver~\cite{hurkens1989size} showed a $(k/2+\epsilon)$
approximation algorithm using an intuitive local search
algorithm. Halld\'orsson~\cite{halldorsson1995approximating} then
obtained a quasipolynomial $(k+2)/3$-approximation. Using a different
approach this was improved by Cygan, Grandoni, and
Mastrolilli~\cite{cygan2013sell} to a quasipolynomial time
$(k+1+\epsilon)/3$-approximation. A polynomial time
$(k+2)/3$-approximation was obtained by Sviridenko and
Ward~\cite{sviridenko2013large} using color-coding techniques. The
best known result for \problemmacro{$k$-Set Packing} to date is a
$(k+1+\epsilon)/3$-approximation due to
Cygan~\cite{cygan2013improved}, and also by Furer and
Yu~\cite{furer2014approximating}. It is interesting to note that all
of these results are based on local search. We believe our techniques
to be a useful addition to this repertoire.

In an important direction of research Chan and
Lau~\cite{chan2012linear} consider the power of linear and
semidefinite relaxations for the \problemmacro{$k$-Set Packing
  problem}. They show that a particular LP relaxation has integrality
gap at most $(k+1)/2$. A different LP relaxation arrived at by
applying $O(k^2)$ rounds of Chv\'atal-Gomory cuts to the standard LP
relaxation was also shown to have no worse integrality gap by Singh
and Talwar~\cite{singh2010improving}. It remains interesting to
consider the applicability of ``alternating tree'' style analyses, as
presented in this paper, to better understand the integrality gaps of
such strong LP and SDP relaxations.

For a different notion of bipartiteness in hypergraphs, Conforti
et al.~\cite{conforti1996perfect} study sufficient conditions for the
existence of perfect matchings. We also mention that for the case of
general hypergraphs, sufficient conditions in the spirit of Dirac's
theorem for graphs~\cite{dirac1952some} are known (see Alon
et al.~\cite{alon2012large} and references therein).

\section{Preliminaries}\label{section:preliminaries}
\begin{definition}[Bipartite hypergraph]
  An $r$-uniform bipartite hypergraph $H = (A, B, E)$ is a hypergraph
  on a vertex set partitioned into two sets $A$ and $B$ such that for
  every edge $e \in E$, $|e \cap A| = 1$ and $|e \cap B| = r-1$.
\end{definition}

Let $H=(A,B,E)$ be a $r$-uniform bipartite hypergraph. It is important
to note that we assume that the underlying bipartition of the vertex set is given
to the algorithm. We will
use $n$ and $m$ to refer to $|A|$ and $|E|$ respectively in $H$. A
subset of edges $M \subseteq E$ is called a \textit{partial matching}
if any pair of edges in the set are disjoint. A partial matching whose
edges contain every vertex of $A$ is a \textit{perfect matching}.

We need some notation for referring to the collection of $A$ vertices
and $B$ vertices in a set of edges $F \subseteq E$. For a subset of
edges $F \subseteq E$ we use $A(F)$ to denote the set
$\cup_{e\in F} \; e \cap A.$ $B(F)$ is defined similarly as
$\cup_{e\in F} \; e \cap B$.

We say that a vertex $a \in A$ is \textit{matched} by a partial
matching $M$ if $a \in A(M)$. Recall that a perfect matching is a
partial matching that matches all the vertices of $A$.

For the definitions that follow consider a fixed partial matching
$M$ in $H$. From the context it will always be clear what the
considered partial matching is.

\begin{definition}[Blocking edges]\label{def:blocking}
The set of
edges \textit{blocking} a given edge $e\in E$ is the set
\[\{f \in M \; | \; f \cap e \cap B \not = \emptyset\},\]
i.e., it contains edges in $M$ that prevent us from adding
$e$ to it.
\end{definition}
Note that $M$ may contain an edge $e'$ such that
$e' \cap e \cap B = \emptyset$ and it matches $\bar{a}$ in $M$, where $\{\bar{a}\} = e \cap A$, in
which case we may want to also add $e'$ to the set of blocking edges
of $e$ but we do \emph{not} do so according to Definition~\ref{def:blocking}.

An edge $e \in E$ is called \textit{immediately addable} if it has no
blocking edges. The name reflects the property that $M \cup \{e\}$ is
also a partial matching for such an edge $e$, unless the $A$ vertex
contained in $e$ is already matched by $M$. We refer to an edge $e \in E$ as an edge \emph{for}
$a \in A$ if $a \in e$.

The definitions are made with the following simple operation in mind.

\paragraph{Swapping operation} Suppose that $a$ is matched by
$M$ through some edge $e \in M$ and that there is
an immediately addable edge $f \in E$ for $a$. Then the set
$M \setminus \{e\} \cup \{f\}$ is also a partial matching
that matches exactly the same set of $A$ vertices as
$M$.

A final piece of notation is the following. For a collection of
indexed sets $\{S_0,S_1,\dots,S_k\}$ we write $S_{\leq t}$ to denote
\[\bigcup_{i=0}^t S_i.\]

\begin{definition}[Layer]\label{def:layer}
  A layer $L$ for a bipartite hypergraph $H= (A,B,E)$ with respect to
  a partial matching $M$ is a tuple $(X, Y)$ where
  \begin{itemize}
  \item $X \subseteq E \setminus M$,
  \item for each pair of distinct edges   $e, e' \in X$, $e \cap e'
    \cap B = \emptyset$,
  \item $Y \subseteq M$ is precisely the set of blocking edges of $X$, and
  \item every $e \in Y$ intersects exactly one edge from $X$.
  \end{itemize}
\end{definition}

\begin{definition}[Alternating tree]\label{def:alttree}
  An alternating tree $T$ for a bipartite hypergraph $H = (A, B, E)$
  with respect to a partial matching $M$ is a tuple 
  $(L_0,\dots,L_\ell)$ such that:
  \begin{itemize}
  \item $L_0 =(X_0, Y_0)$ is defined to be $(\emptyset, \{a_0\})$ for some 
  $a_0$ not matched by $M$,
  \item $L_1,\dots,L_\ell$ are layers,
\item $A(Y_{i-1}) \supseteq  A(X_i)$ for all $i=1,\dots,\ell$, and
\item $B(X_i \cup Y_i) \cap B(X_{i'}\cup Y_{i'}) = \emptyset \;
  \forall i \neq i' \in \{0,\dots,\ell\}$.
  \end{itemize}
  $a_0$ is called the \emph{root} of the alternating tree $T$. The
  \emph{degree} of an $A$ vertex $\bar{a} \in A(Y_{\leq \ell})$ is
  defined to be the number of edges from $(X_{\leq \ell} \cup Y_{\leq
    \ell}) \setminus Y_0$ that contain $\bar{a}$.
\end{definition}

\paragraph{Intuition} Our goal will be to obtain an augmenting
algorithm that takes some partial matching $M$ that does not match
some $a_0 \in A$ and turns it into a different partial matching $M'$
that matches all the vertices of $A(M)\cup \{a_0\}$. To accomplish
this consider some edge $e \in E$ for $a_0$. If it is immediately
addable then we are done. Otherwise there are some blocking edges of
$e$, call them $F$, that prevent us from adding $e$ to $M$. To make
progress we will try to perform a swapping operation on some of the
vertices from $A(F)$ thereby reducing the number of blocking edges of
$e$. To do so we need to find edges for $A(F)$ which may themselves
turn out to be blocked and so on. This alternating structure is
captured in our definition of a layer and the tree structure that
follows is the reason behind Definition~\ref{def:alttree}. See
Figure~\ref{figure:alttree} for an example of an alternating tree.

\begin{figure}
  \captionsetup{width=0.90\textwidth}
  \centering
  \includegraphics[width=0.75\textwidth, trim=0 3cm 0 1cm]{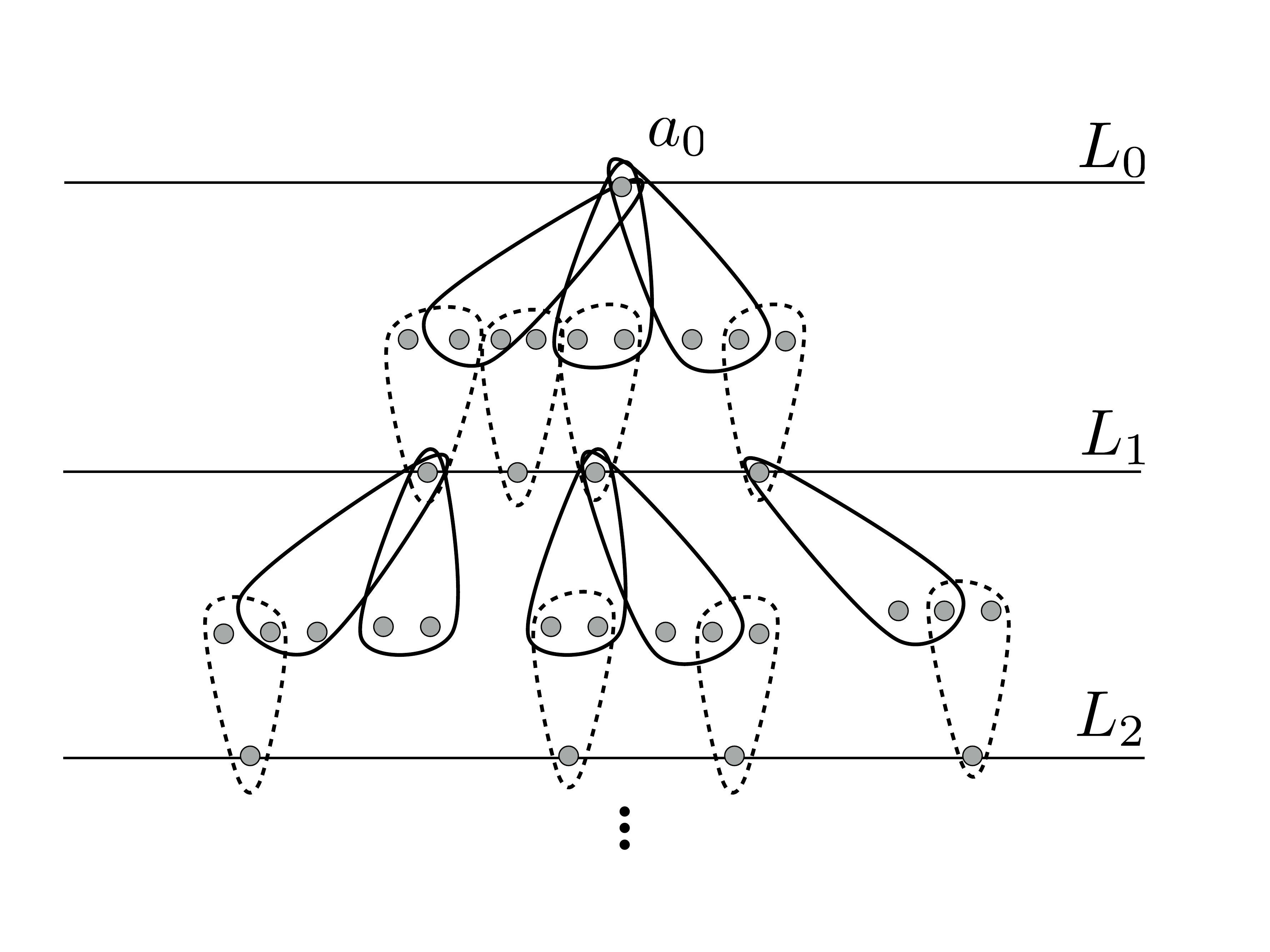}
  \caption{Alternating tree arising from some $3$-uniform bipartite
    hypergraph and partial matching, depicted here with a root $a_0$ and layers $L_1$ and $L_2$. The
    edges in $X_1$ and $X_2$ appear with a solid border, whereas
    (blocking) edges in $Y_1$ and $Y_2$ appear with a dotted border. The second edge from
the left with a solid border in $L_2$ is an example of an immedidately addable edge.}
  \label{figure:alttree}
\end{figure}

% \begin{example}
%   Given an alternating tree $T=(L_0,\dots,L_\ell)$ with respect to
%   partial matching $M$ in $H$, if there is an immediately addable edge
%   $e \in X_{\leq \ell}$ then we can perform a swap operation to obtain
%   a new alternating tree $T'$ and a new partial matching $M'$. The new
%   partial matching $M'$ also matches the same set of $A$ vertices as
%   $M$, unless $e$ was part of $X_1$ (in which case $a_0$ is
%   additionally matched in $M'$).
% \end{example}

\paragraph{Degree bound} Our augmenting algorithm
depends on a single parameter
$\mu:= \epsilon^2/(10r^2).$ We also define $U := \lceil 1/\mu \rceil.$ As
$U+1$ will turn out to be an upper bound on the degree of any $A$
vertex in the alternating tree maintained by the augmenting algorithm,
we refer to $U$ as the degree bound. Note that every $A$ vertex in an
alternating tree $T$, except for the root, is part of exactly one
blocking edge, which follows from Definition~\ref{def:alttree} and the
fact that $M$ is a partial matching. Therefore, the degree bound implies that each non-root
$A$ vertex can be part of at most $U$ other (non-blocking) edges in
the alternating tree.

\begin{remark}
  Without loss of generality we will assume $0 < \epsilon < 1$. The
  parameters $\mu, U$ of our augmenting algorithm are set keeping in
  mind this range of values that $\epsilon$ can assume. If we knew
  stronger guarantees about the hypergraph $H$, for example, for
  $\epsilon$ as large as $10r$, then the values of these parameters
  can be set less aggressively and the running time bounds we obtain
  in later sections can also be improved drastically. Our goal here,
  however, is to show the existence of polynomial time algorithms even
  for a tiny advantage $\epsilon$.
\end{remark}

\section{The Augmenting Algorithm}\label{section:augmenting}
\subsection{The BuildLayer Subroutine}\label{section:augmenting1}
We first describe a subroutine $\textbf{BuildLayer}$ that is used by the
augmenting algorithm. It takes as input an alternating tree $T$, and a
pair of sets $X, Y \subseteq E$ that serve as the initial values for
the layer that the subroutine constructs. The subroutine augments $X$
and $Y$ and returns them at the end.

\paragraph{$\textbf{BuildLayer}(T, X, Y):$}
\begin{enumerate}[(a)]
\item We now describe what we mean by an ``addable edge'' for some
  given $X, Y \subseteq E$ and alternating tree $T$. Suppose $T = (L_0, \dots, L_\ell)$. For an $A$ vertex
  $\bar{a} \in A(Y_{\ell})$ we say that $\bar{a}$ has an \emph{addable
    edge} if i) $\bar{a}$ has fewer than $U$ edges in $X$, and ii)
  $\exists$ edge $e \in E$ for $\bar{a}$ disjoint from
  $B(X_{\leq \ell}\cup Y_{\leq \ell} \cup X \cup Y)$.
\item
    \textbf{While} there is an $\bar{a} \in A(Y_{\ell})$ having an
    addable edge $e \in E$, add $e$ to $X$ and its blocking
    edges to $Y$ as follows:
    \begin{eqnarray*}
      X &\leftarrow& X \cup \{ e \}, \\
      Y &\leftarrow& Y \cup \{f \in M \; | \; f \cap e
                              \cap B \neq \emptyset \}.
    \end{eqnarray*}
    \textbf{EndWhile}.
\item Return $(X, Y)$
\end{enumerate}

\subsection{Main Algorithm}\label{section:augmenting2}

We now describe the augmenting algorithm. The input to the algorithm
is a partial matching $M$ along with an $A$ vertex $a_0$ that is not
matched by $M$.

\paragraph{Initialization} Initialize layer $L_0$ in an alternating
tree $T$ by setting $(X_0, Y_0) \leftarrow (\emptyset, \{a_0\})$. The
variable $\ell$ will be updated to always point to the last layer in
the tree $T$. Set it to $0$.

\paragraph{Main Loop} Repeat the following two phases in order until
$a_0$ is matched by $M$.

\begin{enumerate}[(I)]
\item \textbf{Building phase}

  \begin{enumerate}
  \item Set $(X_{\ell+1}, Y_{\ell+1}) \leftarrow (\emptyset,
    \emptyset)$.
  \item $(X_{\ell+1}, Y_{\ell+1}) \gets \textbf{BuildLayer}(T,
    X_{\ell+1}, Y_{\ell+1}).$
  \item Add the new layer $L_{\ell+1}:=(X_{\ell+1}, Y_{\ell+1})$ to
    $T$.
  \item Increment $\ell$ to $\ell+1$.
  \end{enumerate}

\item \textbf{Collapse phase} Recall that $e \in E$ is
  \textit{immediately addable} if no edges from $M$ are blocking it,
  i.e., $f \cap e \cap B = \emptyset \; \forall f \in
  M$.

    \textbf{While} $X_{\ell}$ contains more than $\mu|X_{\ell}|$
    immediately addable edges, perform the following steps:

    For convenience, we call this set of steps in this iteration, the
    \emph{collapse} operation of layer $L_{\ell}$.

    \begin{enumerate}
    \item\label{step:lazy1} For each $f \in Y_{\ell-1}$ such that there is an immediately
      addable edge $e \in X_{\ell}$ for $\bar{a} \in A \cap f$,
      \begin{eqnarray*} 
        M &\leftarrow& M \setminus \{f\} \cup \{e\}, \\
        Y_{\ell-1} &\leftarrow& Y_{\ell-1} \setminus \{f\}. 
      \end{eqnarray*}
    \item Discard layer $L_\ell$ from $T$.
    \item\label{step:simulate} In this step we perform a
      \emph{superposed-build} operation on layer
      $L_{\ell-1} = (X_{\ell-1}, Y_{\ell-1})$ in $T$. Note that this
      layer is modified in this step iff the condition in
      Step~\ref{step:cond} is satisfied.
      \begin{enumerate}
      \item
        $(X'_{\ell-1}, Y'_{\ell-1}) \gets \textbf{BuildLayer}(T,
        X_{\ell-1}, Y_{\ell-1}).$
      \item\label{step:cond} If
        $|X'_{\ell-1}| \geq (1 + \mu)|X_{\ell-1}|$ then,
        $(X_{\ell-1}, Y_{\ell-1}) \gets (X'_{\ell-1}, Y'_{\ell-1})$
      \end{enumerate}
    \item $\ell \leftarrow \ell-1$.
    \end{enumerate}
    \textbf{EndWhile}.
    % Repeat until layer $L_\ell$ has less than $\mu|A_{\ell}|$ immediately addable edges.
\end{enumerate}

After the initialization, the main loop of the algorithm consists of
repeating the build and collapse phases in order. The state of the
algorithm at any moment is described by the alternating tree
$T=(L_0,\dots,L_\ell)$ and the partial matching $M$ maintained by the
algorithm, both of which are dynamically modified. It is not difficult
to verify that the addition of an extra layer $L_{\ell+1}$ in the
build phase and the collapse operations in the collapse phase modify
$T$ and $M$ in legal ways so that the resulting objects are consistent
with the definitions of an alternating tree and a partial matching,
respectively. We use these
facts without mention in the rest of the analayis.

Also note that set of vertices matched by $M$ always remains the same
throughout the execution of the algorithm until a collapse operation
on layer $L_1$ is performed, after which $M$ additionally matches
$a_0$, and the algorithm terminates.

%%% Local Variables:
%%% mode: latex
%%% TeX-master: "polytime"
%%% End:

% \section{Invariants}
% \input{invariants}

\section{Analysis}\label{section:analysis}
We call a layer $L_i = (X_i, Y_i)$ \emph{collapsible} if more than $\mu|X_i|$ many edges in $X_i$ are immediately addable with respect to $M$. This is precisely the condition of the while loop in the collapse phase of the augmenting algorithm from Section~\ref{section:augmenting2}.

% Due to space constraints we omit proofs of the statements in this section. They can be found in Appendix~\ref{appendix}.

\begin{proposition}\label{lem:notcollapsible}
  Suppose that the alternating tree $T=(L_0,\dots,L_\ell)$ and the
  partial matching $M$ describe the state at the beginning of some
  iteration of the main loop of the augmenting algorithm. Then none of
  the layers $L_0,\dots,L_\ell$ are collapsible. As a corollary it
  follows that $|Y_{i}| \geq (1-\mu)|X_i|$ for each $i=1,\dots,\ell$.
\end{proposition}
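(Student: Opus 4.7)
The plan is to argue by induction on the number of iterations of the main loop completed so far. The base case is immediate: right after initialization only $L_0 = (\emptyset, \{a_0\})$ is present, and since $|X_0| = 0$ it is vacuously non-collapsible. For the inductive step, assume at the start of some iteration every layer of $T$ is non-collapsible; I want to conclude the same property at the start of the next iteration. The building phase appends a fresh layer $L_{\ell+1}$, but neither $M$ nor any previously existing layer is touched (the subroutine \textbf{BuildLayer} only grows the two sets handed to it and never modifies the partial matching), so the hypothesis persists into the collapse phase, with $L_0,\dots,L_\ell$ non-collapsible and only $L_{\ell+1}$ possibly collapsible.

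For the collapse phase I would maintain the inner invariant that every layer strictly below the current top is non-collapsible; this holds when the loop is entered. I would then check that a single pass of the while loop preserves it. The pass modifies $M$ only by removing some $f \in Y_{\ell-1}$ and inserting a corresponding $e \in X_\ell$ in Step~(a), discards the old top $L_\ell$, and runs a superposed-build on $L_{\ell-1}$ that again only grows $X_{\ell-1}$ and $Y_{\ell-1}$. The crucial ingredient is the disjointness clause of Definition~\ref{def:alttree}: for every $j \leq \ell-2$, $B(X_j \cup Y_j)$ is disjoint from both $B(X_{\ell-1}\cup Y_{\ell-1})$ and $B(X_\ell \cup Y_\ell)$. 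Hence no edge added to or removed from $M$ shares a $B$-vertex with any edge of $X_j$, so the blocked/addable status of each edge of $X_j$ is preserved and no new blocker of $X_j$ is introduced. In particular each such $L_j$ stays non-collapsible, which is exactly the invariant after decrementing $\ell$. When the while loop finally exits, the new top is non-collapsible by the exit condition, and together with the invariant this yields the inductive claim for the next iteration.

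The corollary is then a short counting argument. If $L_i = (X_i,Y_i)$ is non-collapsible then at most $\mu|X_i|$ edges of $X_i$ are immediately addable, so at least $(1-\mu)|X_i|$ edges of $X_i$ carry at least one blocker in $Y_i$. Definition~\ref{def:layer} requires every $f \in Y_i$ to intersect exactly one edge of $X_i$, so $|Y_i|$ equals the total number of (blocker, blocked-edge) incidences, which is bounded below by the number of blocked edges of $X_i$; thus $|Y_i| \geq (1-\mu)|X_i|$.

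The delicate point I anticipate is verifying that the superposed-build of Step~\ref{step:simulate} cannot inadvertently alter the collapsibility of a lower layer. This reduces to reading off from the pseudocode of \textbf{BuildLayer} that it never modifies $M$ and only grows the two sets passed to it, so for $j \leq \ell - 2$ the layer $L_j$ is pointwise unchanged throughout the collapse phase, and the disjointness argument above then applies unchanged.
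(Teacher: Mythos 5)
Your proof is correct and follows the same induction-on-iterations strategy as the paper's, with the same split into the build phase (which touches neither $M$ nor the old layers) and the collapse phase (where only the top layer is discarded and the layer below it possibly grows). The one place you go beyond the paper's text is in explicitly invoking the $B$-disjointness clause of Definition~\ref{def:alttree} to argue that the swaps in Step~\ref{step:lazy1} cannot alter the set of blocking edges, and hence the collapsibility, of any layer $L_j$ with $j \le \ell-2$; the paper asserts only that "the state of each of the layers $L_0,\dots,L_{t-1}$ is unchanged," which is literally a statement about the sets $X_j, Y_j$ and leaves implicit that the matching changes elsewhere in the tree cannot affect their collapsibility. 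Your version spells out exactly the point a careful reader would want checked, and your derivation of the corollary via counting incidences matches the paper's.
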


\begin{proof}
  Suppose that the statement is true at the beginning of the current
  iteration. During the build phase a new layer $L_{\ell+1}$ is
  constructed. If $L_{\ell+1}$ is not collapsible then the claim
  follows for the beginning of the next iteration since none of the
  previous layers were modified in the current iteration. If
  $L_{\ell+1}$ turns out to be collapsible, then by the definition of
  the collapse phase the layers $L_0,\dots,L_t$ are left at the end of
  the collapse phase for some $t \leq \ell$ (note that $t \geq 1$
  unless $a_0$ was matched and the algorithm terminates in the current
  iteration). The state of each of the layers $L_0,\dots,L_{t-1}$ is
  unchanged from the beginning of the current iteration. Layer $L_t$
  on the other hand could have possibly been modified in
  Step~\ref{step:simulate} of the collapse phase. However, since it
  remains part of the alternating tree after the collapse phase it
  implies that $L_t$, subsequent to any modifications, is not
  collapsible. Therefore none of the layers in the alternating tree
  are collapsible at the end of the iteration (unless the algorithm
  terminates after the current iteration). Since the claim is true for
  the first iteration, the claim follows by induction on the number of
  iterations of the main loop of the augmenting algorithm.

  The corollary follows since $L_i$ is a layer, for each
  $i=1,\dots,\ell$, and, by Definition~\ref{def:layer}, $Y_i$ contains
  all the blocking edges of edges in $X_i$ and each edge in $Y_i$
  intersects (at most) one edge of $X_i$.
\end{proof}

Before we state the next proposition some clarification is necessary
concerning the description of the augmenting algorithm in
Section~\ref{section:augmenting}. For instance, in the building phase,
there could be many vertices $a \in A(Y_\ell)$ that have an
\emph{addable edge} (as defined in the Section~\ref{section:augmenting1}), and
even a a given vertex could take many addable edges from which one is
eventually chosen. The final state of layer $L_{\ell+1}$, at the
conclusion of the build phase, depends on the sum total of such
choices. The situation is similar in the collapse phase as well. In
order to properly specify the algorithm and refer to the quantities
maintained by it without ambiguity, we assume that there is a total
ordering on the vertices in $A \cup B$ and edges in $E$, and that
these orderings are used to choose a unique vertex and edge in any
event that many are admissible according to the algorithm description
in Section~\ref{section:augmenting}. This allows us, for example, to
refer precisely to \emph{the} layer $L_{\ell + 1}$ after performing a
build operation, or to \emph{the} layer $L'_\ell$ after performing a
superposed-build operation on layer $L_\ell$, etc.

\begin{proposition}\label{lem:noaugment}
  Suppose that the alternating tree $T=(L_0,\dots,L_\ell)$ and the
  partial matching $M$ describe the state at the beginning of some
  iteration of the main loop of the augmenting algorithm.  Then the
  superposed-build operation on $L_i$ (while ignoring layers
  $L_{i+1},\dots, L_\ell$)
  \[(X'_i, Y'_i) \gets \textbf{BuildLayer}((L_0,\dots, L_i), X_i,
    Y_i),\]
  where $L_i = (X_i, Y_i)$ satisfies $|X'_i| < (1+\mu)|X_i|$ for each
  $i=1,\dots, \ell$.
\end{proposition}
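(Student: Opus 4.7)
My plan is to prove the proposition by induction on the number of completed iterations of the main loop. The base case before the first iteration is vacuous since at initialization $\ell = 0$ and there are no layers $L_i$ with $i \geq 1$ to check. For the inductive step I would assume the bound at the start of iteration $k$ and trace through the two phases carefully, verifying the bound separately for the new top layer and for the lower layers.

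In the build phase, the newly constructed $L_{\ell+1}$ is produced by running \textbf{BuildLayer} to completion. Since \textbf{BuildLayer} terminates only when no addable edge exists, re-invoking $\textbf{BuildLayer}(T, X_{\ell+1}, Y_{\ell+1})$ immediately afterwards adds nothing, so $|X'_{\ell+1}| = |X_{\ell+1}|$. The preexisting layers $L_0, \ldots, L_\ell$ and the matching $M$ are untouched, and since superposed-build on $L_i$ explicitly ignores layers above $L_i$, the inductive hypothesis transfers unchanged. In the collapse phase I would analyze each iteration of the while loop. For the new top layer $L_{\ell-1}$: in case (i) of Step~\ref{step:simulate}, $L_{\ell-1}$ is replaced by the output of \textbf{BuildLayer}, which is saturated and therefore idempotent under a further \textbf{BuildLayer} call; in case (ii), $L_{\ell-1}$ is left in place but the superposed-build just performed directly verified $|X'_{\ell-1}| < (1+\mu)|X_{\ell-1}|$. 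Either way the bound holds. For layers $L_i$ with $i < \ell-1$, neither $X_i$ nor $Y_i$ changes, but $M$ is modified in Step~\ref{step:lazy1} by swapping out $R \subseteq Y_{\ell-1}$ in favor of $I \subseteq X_\ell$. The alternating tree property gives $B(R \cup I) \subseteq B(L_{\ell-1}) \cup B(L_\ell) \subseteq B(L_{>i})$, so the portion of $M$ whose $B$-vertices meet $B(L_{\leq i})$ is unchanged. I would then want to conclude that the execution of \textbf{BuildLayer} on $L_i$ is identical under the old and new matchings, preserving the inductively guaranteed bound.

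The main obstacle is this last lockstep argument. Because superposed-build on $L_i$ explicitly ignores layers $L_{i+1}, \ldots, L_\ell$, \textbf{BuildLayer} may in principle add an edge with $B$-vertices in $B(L_{>i})$, which is precisely the region where $M$ has changed. A careful secondary induction on the steps of \textbf{BuildLayer} would need to show that any such edge has the same blocking edges under $M_\text{old}$ and $M_\text{new}$, so the accumulated $Y$ and hence the set of future addable edges evolve identically. Making this tracking precise -- exploiting that any $f \in R$ has $B(f) \subseteq B(L_{\ell-1})$ and any $e \in I$ has $B(e) \subseteq B(L_\ell)$, and that whenever such an incidence with an edge added by \textbf{BuildLayer} arises, the shared $B$-vertex has already been placed into $B(X \cup Y)$ by the time the discrepancy between $M_\text{old}$ and $M_\text{new}$ could matter -- is the main technical challenge.
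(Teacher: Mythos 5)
Your overall strategy -- induction on iterations, with the base case handled by the idempotence of a freshly-run \textbf{BuildLayer}, and the inductive step split by whether the iteration's collapse phase modifies the layer being considered -- is the same as the paper's, and the part of the argument you do carry out is sound. In particular you correctly note the crucial point that the initial addability check of a superposed-build on $L_i$ depends only on $X_{\leq i}$, $Y_{\leq i}$, $E$ and the degree bound, not on $M$, so as long as $Y_i$ has not shrunk the operation adds nothing regardless of what has happened to the matching above layer $i$.

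However, your proposal is not a complete proof. You explicitly flag the remaining step -- showing that the accumulated $Y$ (and hence the final $|X'_i|$) is unaffected by the modifications $M \mapsto M \setminus R \cup I$ coming from collapses of layers strictly above $L_{i+1}$ -- as ``the main technical challenge,'' and you do not resolve it. This is a genuine gap, not a routine verification. The difficulty is real: once $Y_i$ has shrunk (after a non-committed superposed-build triggered by a collapse of $L_{i+1}$), subsequent runs of \textbf{BuildLayer} on $L_i$ do add edges, and the blocking edges of an added edge $e$ are drawn from the \emph{current} $M$; since $e$ need only be disjoint from $B(L_{\leq i})$, it is free to meet $B(L_{s-1})$ or $B(L_s)$ for $s \geq i+2$, which is precisely where $M$ has been edited. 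Your heuristic that ``the shared $B$-vertex has already been placed into $B(X \cup Y)$ by the time the discrepancy could matter'' is not justified and is, as far as I can see, not true in general: a removed $f \in R$ would have contributed $B(f)$ to $Y$ under the old matching but contributes nothing under the new one, loosening the constraint on later addable edges. It is worth noting that the paper's own proof is equally terse on this point -- it only discusses the effect of collapses of $L_{t+1}$ on $Y_t$ and does not explicitly argue that collapses of higher layers leave the superposed-build outcome on $L_t$ unchanged -- so you have in effect located the one place where both your argument and the published one require more care; but since you stop short of supplying that care, your proposal does not establish the proposition.
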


\begin{proof}
  Consider some layer $L_t$ for $0 \leq t \leq \ell$ present in the
  alternating tree at the beginning of the current iteration. At the
  iteration when layer $L_t$ was built a superposed-build operation
  could not have increased the size of $X_t$ even by one. If no
  collapse operations of some layer occurred until the current
  iteration then the situation remains identical, because layer
  $L_{t+1}$ was not collapsed in particular. If however, some layer
  was collapsed then it must have an index strictly greater than $t$
  (since otherwise, the algorithm would have discarded layer $L_t$ in
  that case). As every time layer $L_{t+1}$ is collapsed, and some
  edges from $Y_t$ are removed, the algorithm tries to augment $X_t$
  by a $\mu$ fraction when possible (in Step~\ref{step:simulate} of
  the collapse phase), it follows that the number of edges in $X_t$
  cannot increase by more than a $\mu$ fraction on performing
  superposed-build operation on $L_t$.
\end{proof}

% \begin{lemma}\label{lem:phases}
%   In each iteration of the main loop of the augmenting algorithm
%   \begin{enumerate}
%   \item there is exactly one execution of the build phase, and
%   \item the simulate phase modifies the alternating tree only if some layer was collapsed during the collapse phase.
%   \end{enumerate}
% \end{lemma}
% \begin{proof}
%   The first part follows from the algorithm description whereas the second part follows from Lemma~\ref{lem:noaugment}.
% \end{proof}

To ensure that the algorithm does not get stuck we need to show that, for some state $(L_0,\dots,L_\ell)$ and $M$ reached at the beginning of an iteration of the main loop, the build phase creates a new layer $L_{\ell+1}$ with at least one edge. We prove the following stronger statement.

\begin{theorem}\label{lem:addable}
  Suppose $T=(L_0,\dots,L_\ell)$ is the alternating tree at the beginning of some iteration of the main loop of the augmenting algorithm and let $L_{\ell+1}$ be the newly constructed layer in the build phase of the iteration. Then,
  \[ |X_{i+1}| > \frac{\epsilon}{5r^2} |Y_{\leq i}|,\]
  for each $i=0,\dots,\ell$.
\end{theorem}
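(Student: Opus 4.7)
The plan is to apply the strengthened Haxell condition in a hypothetical ``saturated'' state $(X^*_k,Y^*_k)_{k=0}^{i+1}$ obtained by performing superposed-builds on $L_1,\ldots,L_{i+1}$ in order. A short induction---using Proposition~\ref{lem:noaugment} together with the observation that each successive superposed-build faces a superset of the original $B$-vertex constraints (hence no more addable edges than originally)---shows $|X^*_k|\le(1+\mu)|X_k|$ for every $k\le i+1$. In this saturated state, for every $\bar a\in A(Y_k)$ that is incident to fewer than $U$ edges of $X^*_{k+1}$ (call it \emph{good}), no further addable edge exists, so every edge of $E$ containing $\bar a$ intersects $H^*:=B(X^*_{\le i+1}\cup Y^*_{\le i+1})$. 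The complementary set of \emph{bad} vertices $T$ satisfies $|T|\le\sum_{k\le i}|X^*_{k+1}|/U\le(1+\mu)\mu|X_{\le i+1}|$ by the degree bound $U\ge 1/\mu$.

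Since $|A(Y_{\le i})|=|Y_{\le i}|$ (because $a_0\notin A(M)$ and $M$ is a matching), applying the strengthened Haxell condition to $A(Y_{\le i})\setminus T$ yields
\[|H^*|\;\ge\;\tau(E_{A(Y_{\le i})\setminus T})\;>\;(2r-3+\epsilon)(|Y_{\le i}|-|T|-1).\]
For the matching upper bound on $|H^*|$, the crucial observation is that each edge of $Y^*_k\subseteq M$ shares at least one $B$-vertex with the unique edge of $X^*_k$ it blocks, while edges of $M$ are pairwise $B$-disjoint. This gives the tight within-layer estimate $|B(X^*_k\cup Y^*_k)|\le(r-1)|X^*_k|+(r-2)|Y^*_k|$, which summed across the $B$-disjoint layers yields $|H^*|\le(r-1)|X^*_{\le i+1}|+(r-2)|Y^*_{\le i+1}|$. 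Now substitute the bounds $|X^*_{\le i+1}|\le(1+\mu)|X_{\le i+1}|$, $|Y^*_{\le i+1}|\le|Y_{\le i}|+(r-1)\mu|X_{\le i}|+(r-1)(1+\mu)|X_{i+1}|$, the estimate on $|T|$, and the corollary of Proposition~\ref{lem:notcollapsible} that $|X_{\le i}|\le(|Y_{\le i}|-1)/(1-\mu)$.

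The critical algebraic cancellation is $(2r-3+\epsilon)-(r-2)=r-1+\epsilon$: the $(r-2)|Y_{\le i}|$ piece arising from expanding $Y^*_{\le i+1}$ cancels most of the Haxell term, and the further $(r-1)(1+(r-1)\mu)/(1-\mu)\approx r-1$ loss from bounding $|X_{\le i}|$ leaves a net coefficient of $\epsilon-r(r-1)\mu/(1-\mu)\ge 8\epsilon/9$ on $|Y_{\le i}|$ under the choice $\mu=\epsilon^2/(10r^2)$. The $|X_{i+1}|$ side consolidates to at most $(9/8)(r-1)^2$, and the inequality $(r-1)^2<5r^2$ (equivalently $4r^2+2r-1>0$ for $r\ge 1$) then yields $|X_{i+1}|>(\epsilon/(5r^2))|Y_{\le i}|$ up to an additive $O(r)$ constant. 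This constant is harmless: for $|Y_{\le i}|\ge 5r^2/\epsilon$ it is absorbed by the dominant term, while for smaller $|Y_{\le i}|$ the claimed bound is below $1$ and follows from the trivial $|X_{i+1}|\ge 1$ (which, via Haxell applied to $\{a_0\}$, is already established at the base $i=0$ and cascades upward). The main obstacle is precisely this bookkeeping: the $O(\mu)$ corrections from $|T|$, from the gap between $(X^*,Y^*)$ and $(X,Y)$, and from the ``$-1$'' in Haxell's bound must each be controlled by the choice of $\mu$. The use of the refined within-layer bound $(r-1)|X^*_k|+(r-2)|Y^*_k|$ in place of the naive $(r-1)^2|X^*_k|$ is indispensable: for $r\ge 3$ the naive bound produces a non-positive coefficient on $|Y_{\le i}|$ (since $2r-3<(r-1)^2$ strictly), whereas the refined bound matches Haxell's threshold $2r-3$ precisely enough to enable the cancellation.
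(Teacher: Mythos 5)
Your proposal takes essentially the same route as the paper. The paper proves Theorem~\ref{lem:addable} by splitting into Lemma~\ref{lem:addable0} (the ``small $|Y_{\leq\ell}|$'' regime, where $\mu|X_{\leq\ell}|<1$ kills all the error terms and Haxell's condition with the bound $(2r-3)(|S|-1)$ on the tree's $B$-vertices gives $|X_{\ell+1}|\geq 1$) and Lemma~\ref{lem:addable1} (the main regime). Your ``saturated tree'' $(X^*,Y^*)$ is exactly the paper's Claim~\ref{claim:local}: the paper augments the tree's $B$-vertices by the at most $\mu|X_{\leq\ell}|(r-1)^2$ extra vertices that superposed-builds would introduce and uses that as the hitting set, while simultaneously removing the vertices $\cup S_i$ (those that would take addable edges) from $S$; you keep them in $S$ but absorb them into $X^*$, which is an equivalent bookkeeping choice. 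Your refined within-layer bound $(r-1)|X^*_k|+(r-2)|Y^*_k|$ is the same $2r-3$-per-vertex accounting the paper uses (each non-root $A$-vertex contributes at most $r-1$ new $B$-vertices from its matching edge and at most $r-2$ from the blocked edge); the paper's $(r-1)^2|X_{\ell+1}|$ for the last layer is not the naive bound applied everywhere, only to the one layer whose $|X|$ is hypothetically tiny, so no loss arises there either.

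Two small caveats. First, your assertion that $|X_{i+1}|\geq 1$ ``cascades upward'' from the base case $i=0$ is not a valid inference as stated: for $i\geq 1$ with $|Y_{\leq i}|<5r^2/\epsilon$ one must re-run the Haxell argument at that scale (this is exactly what the paper's Lemma~\ref{lem:addable0} does, using that $\mu|X_{\leq i}|<1$ forces every tree edge to have a blocker and no vertex to hit the degree cap $U$). Second, when invoking Proposition~\ref{lem:noaugment} to bound $|X^*_{i+1}|$ you should note that for the freshly built last layer the proposition is not what you need — rather, that layer is already saturated by construction of \textbf{BuildLayer}, so $X^*_{i+1}=X_{i+1}$; the proposition only covers layers $1,\dots,\ell$. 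Neither issue affects the soundness of the overall argument, which matches the paper's.
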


The proof of Theorem~\ref{lem:addable} uses Lemma~\ref{lem:addable0} and Lemma~\ref{lem:addable1}, which we prove below.

\begin{lemma}\label{lem:addable0}
  Suppose that $|Y_{\leq \ell}| < \lceil 5r^2/\epsilon \rceil$ at the beginning of some iteration of the main loop. Then when $L_{\ell+1}$ is built in the build phase of the iteration, $|X_{\ell+1}| \geq 1$.
\end{lemma}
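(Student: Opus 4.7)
My plan is to argue by contradiction, establishing the base case directly and then, for $\ell \geq 1$, combining the hypothesis $|Y_{\leq \ell}| < \lceil 5r^2/\epsilon\rceil$ with the degree bound $U$ to push a Haxell-type counting argument through.

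First, the base case $\ell=0$ is immediate: $B(X_{\leq 0}\cup Y_{\leq 0}) = \emptyset$, and Haxell's condition applied to $S=\{a_0\}$ gives $\tau(E_{\{a_0\}})>0$, so some edge for $a_0$ exists and is vacuously disjoint from $\emptyset$; the call to \textbf{BuildLayer} therefore adds it.

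For $\ell \geq 1$, suppose for contradiction that $|X_{\ell+1}|=0$. Since \textbf{BuildLayer} is invoked with $X=Y=\emptyset$, clause (i) of addability is trivial, so clause (ii) must fail for every $\bar{a}\in A(Y_\ell)$: every edge for $\bar{a}$ meets $B := B(X_{\leq \ell}\cup Y_{\leq \ell})$. Thus $B$ is a hitting set for $E_{A(Y_\ell)}$. The crux of the proof is to upgrade this to: $B$ hits $E_{A(Y_{\leq \ell})}$. This is where the hypothesis on $|Y_{\leq \ell}|$ is used together with the degree bound. By Proposition~\ref{lem:notcollapsible}, $|X_{\leq \ell}| \leq |Y_{\leq \ell}|/(1-\mu) < 10r^2/\epsilon$. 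Each vertex of $A(Y_{i-1})$ that is ``saturated'' (has $U$ edges in its layer's $X_i$) consumes $U$ distinct slots of $X_{\leq \ell}$, so the number of saturated vertices in $A(Y_{\leq \ell-1})$ is at most $|X_{\leq \ell}|/U < (10r^2/\epsilon)\cdot(\epsilon^2/(10r^2)) = \epsilon < 1$, i.e.\ zero. Combined with the invariant that the algorithm maintains, namely that whenever \textbf{BuildLayer} last finished on $L_i$ every non-saturated $\bar{a}\in A(Y_{i-1})$ has each of its edges meeting $B(X_{\leq i}\cup Y_{\leq i})$, this gives the desired extension of the hitting-set property to all of $E_{A(Y_{\leq \ell})}$.

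Now apply Haxell's condition to $S = A(Y_{\leq \ell})$. Since distinct $Y_i$'s have disjoint $A$-parts (their $A$-vertices are matched via edges of $M$ with pairwise disjoint $B$-parts by Definition~\ref{def:alttree}) and $A(Y_0)=\{a_0\}$, we have $|A(Y_{\leq \ell})|=|Y_{\leq \ell}|$, giving $|B|\geq \tau(E_S) > (2r-3+\epsilon)(|Y_{\leq \ell}|-1)$. For the matching upper bound on $|B|$, Definition~\ref{def:alttree} ensures the $B$-parts of distinct layers are disjoint, while inside each layer the $X_i$-edges have disjoint $B$-parts (Definition~\ref{def:layer}) and every $Y_i$ edge shares at least one $B$-vertex with some $X_i$ edge, so $|B(X_i\cup Y_i)| \leq (r-1)|X_i|+(r-2)|Y_i|$. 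Plugging in $|X_i|\leq |Y_i|/(1-\mu)$ and summing yields
\[
  |B| \;\leq\; \left[\frac{r-1}{1-\mu}+(r-2)\right](|Y_{\leq \ell}|-1) \;\leq\; \left(2r-3+\tfrac{2\epsilon^2}{5r}\right)(|Y_{\leq \ell}|-1),
\]
using $\mu=\epsilon^2/(10r^2)$ and $\mu\leq 1/2$. Since $L_1$ is not collapsible and \textbf{BuildLayer} was nonempty on it, $|Y_1|\geq 1$ and hence $|Y_{\leq \ell}|-1>0$; cancelling this common factor forces $\epsilon < 2\epsilon^2/(5r)$, i.e.\ $\epsilon > 5r/2$, contradicting $0<\epsilon<1\leq r$.

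The main obstacle I anticipate is justifying the invariant used in the third paragraph. The difficulty is that $Y_i$ can shrink during Step~\ref{step:lazy1} of a collapse, and the subsequent superposed-build in Step~\ref{step:simulate} may fail to commit, so the relationship between the current set $B(X_{\leq i}\cup Y_{\leq i})$ and the set that was present when \textbf{BuildLayer} last ran on $L_i$ is nontrivial. I would handle this via a separate invariant lemma that tracks, collapse-by-collapse, how these sets evolve and verifies that non-saturated vertices remain hit.
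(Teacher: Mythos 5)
Your overall plan matches the paper's: define $S$ from the tree, show no vertex is $U$-saturated, show $B(X_{\le \ell}\cup Y_{\le\ell})$ is a hitting set for $E_S$, and then beat Haxell's lower bound with a $(2r-3)$-type counting. The arithmetic in your upper bound on $|B|$ is valid (you could even tighten $|X_i|\le |Y_i|/(1-\mu)$ to $|X_i|\le|Y_i|$, since $\mu|X_{\le\ell}|<1$ means there are \emph{no} immediately addable edges in the tree, so every $X_i$-edge has a distinct blocker), and the base case $\ell=0$ is fine.

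The one genuine gap is the step you flag at the end: you need that every non-saturated $\bar a\in A(Y_{i-1})$, for every $i\le \ell$, has all of its edges meeting $B(X_{\le\ell}\cup Y_{\le\ell})$, and you propose to prove a ``separate invariant lemma'' tracking this collapse-by-collapse. That lemma is not needed as a new result --- it is exactly Proposition~\ref{lem:noaugment}, already established in the paper. Proposition~\ref{lem:noaugment} states that at the beginning of any iteration, a superposed-build on $L_i$ satisfies $|X'_i|<(1+\mu)|X_i|$; under the hypothesis of this lemma you have $\mu|X_{\le\ell}|<1$, so $|X'_i|-|X_i|<\mu|X_i|<1$, hence $|X'_i|=|X_i|$: the superposed-build on $L_i$ finds nothing, which is precisely the statement that no non-saturated $\bar a\in A(Y_{i-1})$ has an edge disjoint from $B(X_{\le i}\cup Y_{\le i})\subseteq B(X_{\le\ell}\cup Y_{\le\ell})$. (Here one also uses that each non-root $A$-vertex lies in a unique $Y_{i-1}$, so ``fewer than $U$ edges in $X_i$'' is the same as ``fewer than $U$ edges in $X_{\le\ell}$''.) With Proposition~\ref{lem:noaugment} cited, your argument closes; as written, it leaves the load-bearing step as a to-do.
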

\begin{proof}
  By Proposition~\ref{lem:notcollapsible},
  $|X_{\leq \ell}| \leq |Y_{\leq \ell}|/(1-\mu) < 6 r^2/\epsilon,$
  where we use that $\mu < 1/10$. By the choice of
  $\mu = \epsilon^2/(10r^2)$ we then have $\mu|X_{\leq \ell}| <
  1$. Therefore, by the invariants from
  Proposition~\ref{lem:notcollapsible} and
  Proposition~\ref{lem:noaugment}, every edge in $X_{\leq \ell}$ has
  at least one blocking edge in the tree, and no $A$ vertex with less
  than $U$ edges in $X_{\leq \ell}$ has an edge that is disjoint from
  $B(X_{\leq \ell} \cup Y_{\leq \ell})$. Next, no $A$ vertex in the
  tree can have $U = \lceil 1/\mu \rceil$ edges in the tree, as in
  that case the number of blocking edges for that vertex would be at
  least $U$ which is greater than $\lceil 5r^2/\epsilon \rceil$
  contradicting our hypothesis. Taking $S$ to be the set of all the
  $A$ vertices in the tree, so that $|S| = |Y_{\leq \ell}|$, these
  arguments show that the number of $B$ vertices in the tree is an
  upper bound on $\tau(E_S)$.

  We now show that the number of $B$ vertices in the tree is at most
  $(2r-3)(|S|-1)$. To see this, note that every non-root $A$ vertex in
  the tree is included in a unique edge in $M$ (also in the tree),
  which in turn intersects some unique edge in $X_{\leq
    \ell}$. Therefore for each non-root $A$ vertex in the tree, we
  have $(r-1)$ corresponding $B$ vertices from the matching edge in
  $M$ and an additional set of at most $(r-1)-1$ many $B$ vertices from the
  unique edge in the tree that intersects this matching edge. Further,
  this accounts for all the $B$ vertices in the tree. We may over
  count some $B$ vertices in edges that were added as addable edges in
  the alternating tree but this is fine since we are only aiming for
  an upper bound. So each non-root $A$ vertex can be throught to
  contribute at most $(r-1) + (r-2)$ many $B$ vertices to the tree.

  However, the guarantee is that $\tau(E_S)$ must be larger than
  $(2r-3+\epsilon)(|S|-1)$, which is a contradiction.
\end{proof}

We can say something stronger than Lemma~\ref{lem:addable0} when the
number of edges from $M$ in the alternating tree becomes $\Omega(r^2/\epsilon)$. Notice that this condition is satisfied when the number of layers in the alternating tree is $\Omega(r^2/\epsilon)$.

\begin{lemma}\label{lem:addable1}
  If $|Y_{\leq \ell}| \geq \lceil 5r^2/\epsilon \rceil$, at the beginning of some iteration then when layer $L_{\ell+1}$ is built
\[ |X_{\ell+1}| > \frac{\epsilon}{5r^2} |Y_{\leq \ell}|.\]
\end{lemma}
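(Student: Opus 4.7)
Following the template of the proof of Lemma~\ref{lem:addable0}, I would argue by contradiction: suppose $|X_{\ell+1}| \leq \frac{\epsilon}{5r^2}|Y_{\leq \ell}|$, and write $Y := |Y_{\leq \ell}|$ and $x := |X_{\ell+1}|$. The plan is to exhibit a subset $S \subseteq A(Y_{\leq \ell})$ and a hitting set $H$ for $E_S$ with $|H| < (2r-3+\epsilon)(|S|-1)$, contradicting Haxell's bound. I would take $S$ to be the ``low-degree'' vertices of $A(Y_{\leq \ell})$---those with strictly fewer than $U$ edges in $X_{\leq \ell+1}$. Each excluded (high-degree) vertex contributes at least $U$ edges to $X_{\leq \ell+1}$, and since $1/U \leq \mu$, combining Proposition~\ref{lem:notcollapsible}'s corollary with the hypothesis on $x$ yields $|S| \geq Y - 2\mu Y - \mu x$.

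I would split the hitting set as $H := T \cup T'$, where $T := B(X_{\leq \ell+1} \cup Y_{\leq \ell+1})$ is the tree's $B$-vertex set. The counting in Lemma~\ref{lem:addable0}, combined with $|X_i| \leq |Y_i|/(1-\mu)$ and the layer property that each blocking edge shares at least one $B$-vertex with its unique intersecting $X$-edge, generalizes readily to $|T| \leq (2r-3 + O(r\mu))(Y-1) + (r-1)^2 x$. For $T'$: every $\bar{a} \in S \cap A(Y_\ell)$ already has all its edges hit by $T$ (by the termination of \textbf{BuildLayer} when constructing $L_{\ell+1}$), so only vertices $\bar{a} \in S \cap A(Y_i)$ with $i < \ell$ may have unhit edges. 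For such $\bar{a}$, an edge not hit by $T$ is precisely an addable edge in the \textbf{BuildLayer} sense at the current state; Proposition~\ref{lem:noaugment}, applied at the start of the current iteration and combined with the monotonicity that building $L_{\ell+1}$ only adds to the tree's $B$-vertices, bounds the number of iterations that a hypothetical superposed-build on $L_{i+1}$ would perform by $\mu|X_{i+1}|$. A standard $r$-approximation argument for greedy hypergraph packing (each pick's $r-1$ $B$-vertices can kill at most $r-1$ members of any pairwise $B$-disjoint family) then bounds the maximum pairwise $B$-disjoint subfamily of such addable edges by $r\mu|X_{i+1}|$. Taking the $B$-vertices of such a maximum packing as $T'_i$ and summing over layers yields $|T'| \leq 2r^2 \mu Y$.

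The final inequality $|T| + |T'| < (2r-3+\epsilon)(|S|-1)$ is then a routine calculation: under $\mu = \epsilon^2/(10r^2)$, $x \leq \epsilon Y/(5r^2)$, and $Y \geq 5r^2/\epsilon$, the $\epsilon(Y-1)$ excess on Haxell's side dominates the $(r-1)^2 x \leq \epsilon Y/5$ contribution from the top layer as well as the $O(r\mu Y) = O(\epsilon^2 Y/r)$ slack from the non-sharp counting in the lower layers and from the lower bound on $|S|$. The main obstacle is the bound on $|T'|$ in the previous paragraph: Proposition~\ref{lem:noaugment} only bounds the number of \textbf{BuildLayer} iterations, whereas a single pick may kill arbitrarily many addable edges by sharing a $B$-vertex, so the proposition does not directly bound the total number of unhit edges in $E_S$. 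The greedy-packing $r$-approximation salvages the argument: the $B$-vertices of any maximum pairwise-$B$-disjoint packing of unhit edges hit every unhit edge by maximality, while the packing itself is small by the approximation ratio.
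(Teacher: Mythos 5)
You take the same contradiction scaffolding as the paper but deviate at the crucial point: the paper defines $S$ by removing \emph{both} high-degree vertices \emph{and} the vertices $\cup_{i=1}^{\ell}S_i$ that would accept an edge under a superposed-build; you remove only the high-degree vertices and then try to hit the extra unhit edges of the retained vertices via a greedy-packing bound on $T'$. This is where the argument breaks.

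The bound $|\text{packing}_i|\le r\mu|X_{i+1}|$ relies on the superposed-build's picks forming a maximal $B$-disjoint family of the addable edges, so that every packing member is ``killed'' by some pick. That is not what \textbf{BuildLayer} guarantees. The while loop stops when no vertex with \emph{strictly fewer than $U$} edges has an addable edge. A vertex $\bar{a}$ that starts with $U-1$ edges (so it survives your degree filter, since that filter is taken against the original $X_{\le\ell+1}$) and takes one edge during the hypothetical superposed-build reaches degree $U$ and is thereafter ignored by \textbf{BuildLayer}. Such an $\bar{a}$ can still own arbitrarily many pairwise $B$-disjoint edges, all disjoint from the tree's $B$-vertices and from every $B$-vertex introduced by the superposed-build. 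None of them is ``killed by a pick,'' so they all survive into the maximum packing, which can therefore be far larger than any multiple of $\mu|X_{i+1}|$. (A secondary, smaller issue: an addable edge may also be killed by a pick's \emph{blockers}, not only by the pick's own $r-1$ $B$-vertices, so even for the edges that are killed the per-pick factor should be $(r-1)^2$ rather than $r-1$; but this alone would only change constants.)

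The paper's removal of $\cup_i S_i$ is precisely the device that plugs this hole: any vertex that participates in the superposed-build --- in particular, any vertex that could cross the degree threshold and strand unhit edges --- is excluded from $S$, and since $|\cup_i S_i|\le\mu|X_{\le\ell}|$ this costs essentially nothing in the lower bound on $|S|$. After that removal, every $a\in S$ finishes the superposed-build with fewer than $U$ edges and no addable edge, so the tree's $B$-vertices plus the at most $\mu|X_{\le\ell}|(r-1)^2$ new $B$-vertices form a genuine hitting set for $E_S$, and no packing argument is needed. To repair your proof you would need to further prune $S$ by (at least) the vertices that reach degree $U$ during a hypothetical superposed-build, at which point you have in effect rederived the paper's choice of $S$.
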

\begin{proof}
  Suppose that after the build phase constructing layer $L_{\ell+1}$ is complete, $|X_{\ell+1}| \leq \delta|Y_{\leq \ell}|$ where $\delta := \epsilon/(5r^2)$. 

  Let $S_i$ be the set of $A$ vertices from $A(Y_{i-1})$ that would
  take an addable edge if we were to perform a superposed-build
  operation on layer $L_i$ while ignoring layers
  $L_{i+1},\dots,L_{\ell+1}$. Formally, $S_i := A(X'_i \setminus
  X_i),$ where
  \[(X'_i, Y'_i) \gets \textbf{BuildLayer}((L_0,\dots, L_i), X_i,
    Y_i).\]

  Now define $S$ algorithmically (in the sense of performing steps in
  order) as follows:
  \begin{itemize}
  \item set $S$ to be the set of all $A$ vertices in $Y_{\leq \ell}$,
  \item remove all $A$ vertices from $S$ that have $U$ edges in the alternating tree (i.e., appear $U$ times in the edges in $X_{\leq \ell+1}$),
  \item remove all $A$ vertices in $\cup_{i=1}^{\ell} S_i$ from $S$.
  \end{itemize}
The number of $A$ vertices that have $U$ edges in the alternating tree is at most $|X_{\leq \ell+1}|/U$. The number of $A$ vertices in $\cup_{i=1}^\ell S_i$ is upper bounded by $\mu|X_{\leq \ell}|$ using Proposition~\ref{lem:noaugment}. Therefore, \[|S| \geq |Y_{\leq \ell}| - |X_{\leq \ell+1}|/U - \mu|X_{\leq \ell}|.\]
By our hypothesis towards contradiction $|X_{\ell+1}| \leq \delta|Y_{\leq \ell}|.$ Also, by Proposition~\ref{lem:notcollapsible}, $|X_{\leq \ell}| \leq |Y_{\leq \ell}|/(1-\mu).$ Putting these together,
\[ |S| \geq \left[1 - \left(\delta + \frac{1}{1-\mu}\right)\frac1U -\frac{\mu}{1-\mu} \right] |Y_{\leq \ell}|.\]

As $\tau(E_S) \geq (2r-3 + \epsilon)(|S|-1)$, we have 
\begin{align*}
 \tau(E_S) \geq \; (2r-3+\epsilon) \left(\left[1 - \left(\delta + \frac{1}{1-\mu}\right)\frac1U -\frac{\mu}{1-\mu} \right] |Y_{\leq \ell}|-1\right).
\end{align*}
Recall that $\mu = \epsilon^2/(10r^2) < 1/10$. So, after upper bounding the inner sum by
\begin{align*}
 \left(\delta + \frac{1}{1-\mu}\right)\frac1U +\frac{\mu}{1-\mu}  \leq \delta\cdot \frac1U + 2\frac{\mu}{1-\mu} < \frac{\epsilon}{5r^2}\cdot\frac{\epsilon^2}{10r^2} + 2\cdot\frac{\epsilon^2}{10r^2}\cdot\frac{10}{9} < \frac{\epsilon^2}{2r^2},
\end{align*}
we have 
\begin{equation}\label{eqn:largeHS2}
  \tau(E_S) > (2r-3+\epsilon)\left(\left[1 - \frac{\epsilon^2}{2r^2}\right] |Y_{\leq \ell}|-1\right).
\end{equation}
Next we obtain an upper bound on $\tau(E_S)$. We start by proving the following claim.

\begin{claim}\label{claim:local}
  $|B(X_{\leq \ell+1} \cup Y_{\leq \ell+1})| + \mu |X_{\leq
    \ell}|(r-1)^2$ is an upper bound on the cardinality of the
  smallest size hitting set for $E_S$ that is also a subset of $B$, i.e., an upper bound on $\tau(E_S)$.
\end{claim}
\begin{proof}
  From the definition of $S$, every vertex $a \in S$ appears in one of
  the layers $L_0,\dots,L_\ell$ and has strictly less than $U$ edges
  in the tree. Further, since each $a \in S$ is not part of
  $\cup_{i=1}^\ell S_i$ this means that there is no edge in $H$ for
  the vertex $a$ that is disjoint from the $B$ vertices in the tree
  and the $B$ vertices introduced in the superposed-build operations
  in each of the layers $L_1,\dots,L_\ell$. We now bound the total
  number of such $B$ vertices, to prove the claim.

The number of $B$ vertices present in the alternating tree
$T=(L_0,\dots,L_{\ell+1})$ is simply $|B(X_{\leq \ell+1} \cup Y_{\leq
  \ell+1})|$. Next, we know by Proposition~\ref{lem:noaugment} that a
superposed-build operation on a layer $L_i$ for $1\leq i\leq \ell$
produces a layer $L'_i$ such that $|X'_i| < (1+\mu) |X_i|$. Further,
the set of $B$ vertices introduced in $X'_i$ (as part of an addable
edge and their associated blocking edges) not already present in layer
$L_i$ (which was counted previously), is at most
$\mu|X_i|(r-1)^2$---each addable edge along with their blocking edges
contains at most $(r-1)^2$ many $B$ vertices.
\end{proof}

We now bound the total number of $B$ vertices in layers
$L_0,\dots,L_{\ell+1}$ in the alternating tree. The contribution from
layer $L_{\ell+1}$ is at most $|X_{\ell+1}|(r-1)^2$ since each addable
edge and its associated set of blocking edges can introduce at most
$(r-1)^2$ many $B$ vertices. Next, every edge in $X_{\leq \ell}$ is
either immediately addable or not. The $B$ vertices in immediately
addable edges from layers $L_0,\dots,L_\ell$ is at most $\mu|X_{\leq
  \ell}|(r-1)$ using Proposition~\ref{lem:notcollapsible}. The $B$
vertices from layers $L_0,\dots,L_\ell$ that are \emph{not} present in
immediately addable edges can be upper bounded simply by $(|Y_{\leq
  \ell}|-1)(2r-3)$ using the same argument as in Lemma~\ref{lem:addable0}.

Therefore, from Claim~\ref{claim:local} the following upper bound is then obtained for $\tau(E_s)$:
\begin{align*}
 (|Y_{\leq \ell}|-1)(2r-3) + \mu|X_{\leq \ell}|(r-1) + |X_{\ell+1}|(r-1)^2 + \mu |X_{\leq \ell}| (r-1)^2.
\end{align*}

We now explain the terms in the bound. The first three terms bound the number of $B$ vertices in the alternating tree as we saw above. The final term upper bounds contributions from edges not present in the alternating tree but those that could be added during the superposed-build operations on each of the layers $L_1,\dots,L_\ell$. Using the known bounds on $X_{\leq \ell} \leq Y_{\leq \ell}/(1-\mu)$ (from Proposition~\ref{lem:notcollapsible}) and $X_{\ell+1} \leq \delta |Y_{\leq \ell}|$ (by hypothesis),

\begin{align*}
\tau(E_S) < |Y_{\leq \ell}| \left[ (2r-3) + \frac{\mu}{1-\mu} \right. (r-1) + \delta(r-1)^2 + \left. \frac{\mu}{1-\mu}(r-1)^2 \right].
\end{align*}

For the chosen parameters $\mu = \epsilon^2/(10r^2)$ and $\delta=\epsilon/(5r^2)$, we get,

\begin{equation}\label{eqn:smallHS}
\tau(E_S) < \left[ 2r-3 + \epsilon/2\right]|Y_{\leq \ell}|.
\end{equation}

From \eqref{eqn:largeHS2} and \eqref{eqn:smallHS}, we have a contradiction when,

\[ \left[2r-3 + \epsilon/2 \right]|Y_{\leq \ell}| <  (2r-3+\epsilon)\left(\left[1 - \frac{\epsilon^2}{2r^2}\right] |Y_{\leq \ell}|-1\right),\]

which is true for $|Y_{\leq \ell}| \geq \lceil 5r^2/\epsilon \rceil$.
\end{proof}

We now complete the proof of Theorem~\ref{lem:addable}.

\begin{proof}[Proof of Theorem~\ref{lem:addable}]
  First notice that after a layer $L_{i+1}$ is built, the number of
  edges in $X_{i+1}$ is non-decreasing until it is collapsed in some
  future iteration. Also, any collapse operation on a layer leads to
  discarding that layer. Then the claim follows by combining
  Lemma~\ref{lem:addable0} and Lemma~\ref{lem:addable1} to note that
  at the moment when layer $L_{i+1}$ is created, for some $0 \leq i
  \leq \ell$, the inequality $|X_{i+1}| > \epsilon/(5r^2) \cdot |Y_{\leq i}|$ holds. This also remains true in future iterations until a collapse operation occurs in layer $L_{i+1}$, in which case it will no longer be part of the alternating tree maintained by the augmenting algorithm. 
\end{proof}

We are now in a position to bound the number of layers in the alternating tree at any point in the execution of the augmenting algorithm.

\begin{lemma}\label{lem:layers}
  The number of layers in the alternating tree $T=(L_0,\dots,L_\ell)$ maintained during the execution of the augmenting algorithm is always bounded by $O(\log n)$.
\end{lemma}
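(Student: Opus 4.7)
The plan is to leverage the growth estimate from Theorem~\ref{lem:addable} together with the non-collapsibility invariant of Proposition~\ref{lem:notcollapsible} to show that $|Y_{\leq i}|$ grows geometrically in $i$, and then compare to a trivial upper bound $|Y_{\leq \ell}| \leq n$. Concretely, at the beginning of any iteration of the main loop, Proposition~\ref{lem:notcollapsible} gives $|Y_i| \geq (1-\mu)|X_i|$ for each $i = 1, \dots, \ell$, while Theorem~\ref{lem:addable} gives $|X_{i+1}| > \frac{\epsilon}{5r^2}|Y_{\leq i}|$ for each $i = 0, \dots, \ell$. Combining these two estimates yields
\[
|Y_{\leq i+1}| \;\geq\; |Y_{\leq i}| + (1-\mu)|X_{i+1}| \;>\; \bigl(1 + \alpha\bigr)|Y_{\leq i}|,
\]
where $\alpha := (1-\mu)\,\epsilon/(5r^2) > 0$ is a constant depending only on $\epsilon$ and $r$. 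Since $|Y_{\leq 0}| = |\{a_0\}| = 1$, an easy induction on $i$ then gives $|Y_{\leq \ell}| > (1+\alpha)^\ell$.

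Next I would establish the matching upper bound $|Y_{\leq \ell}| \leq n$. The set $Y_0 = \{a_0\}$ contributes a single element (the unmatched root), and the $Y_i$ for $i \geq 1$ are pairwise disjoint subsets of $M$: any edge contains $r-1 \geq 1$ vertices from $B$, so the fourth condition in Definition~\ref{def:alttree}, which forces $B(X_i \cup Y_i) \cap B(X_{i'} \cup Y_{i'}) = \emptyset$ for $i \neq i'$, prevents a single edge from being reused in two different $Y_i$'s. Since $M$ is a partial matching on $A$, we have $|Y_1 \cup \cdots \cup Y_\ell| \leq |M| \leq n-1$, and hence $|Y_{\leq \ell}| \leq n$. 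Chaining with the geometric lower bound yields $(1+\alpha)^\ell < n$, so $\ell < \log n / \log(1+\alpha) = O(\log n)$ at the start of every iteration of the main loop.

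To upgrade this to every moment during the execution, I would note that only the build phase can increase the layer count (and it does so by exactly one), while the collapse phase can only discard layers. Thus the maximum depth ever attained is at most one more than the bound derived above, which is still $O(\log n)$ with a constant that depends only on $\epsilon$ and $r$. The main conceptual work—ensuring the per-layer multiplicative growth despite the degree cap $U$ and the possibility of superposed rebuilds—has been accomplished in Theorem~\ref{lem:addable}; given that result, the proof of the layer bound is essentially a routine geometric-series calculation and I do not anticipate any real obstacle.
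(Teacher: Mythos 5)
Your proof is correct and follows essentially the same route as the paper's: combine the non-collapsibility bound $|Y_i| \geq (1-\mu)|X_i|$ from Proposition~\ref{lem:notcollapsible} with the growth estimate $|X_{i+1}| > \frac{\epsilon}{5r^2}|Y_{\leq i}|$ from Theorem~\ref{lem:addable} to get geometric growth of $|Y_{\leq i}|$, then compare against the trivial upper bound $|Y_{\leq \ell}| \leq n$. You fill in a couple of details the paper glosses over (e.g., the $B$-disjointness argument for why the $Y_i$'s are pairwise disjoint subsets of $M$, and the one-extra-layer slack during the build phase), but the argument is the same.
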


\begin{proof}
Suppose there are $\ell+1$ layers $L_0,\dots,L_\ell$ at the beginning of some iteration of the main loop of the augmenting algorithm. Consider some layer $L_i$ for $1 \leq i \leq \ell$. By Proposition~\ref{lem:notcollapsible} less than $\mu$ fraction of $X_{i}$ are immediately addable, and hence $|Y_{i}| > (1-\mu)|X_{i}|$. Then by Theorem~\ref{lem:addable} we have \[|Y_i| > \frac{(1-\mu)\epsilon}{(5r^2)} |Y_{\leq i-1}|\] for each $i = 1,\dots, \ell$. This quickly yields $n = |A| \geq |Y_{\leq \ell}| > (1 + (1-\mu)\epsilon/(5r^2))^\ell|Y_0|$, so that $n \geq (1 +\gamma)^\ell$, where $\gamma := (1-\mu)\epsilon/(5r^2)  > 0$. Altogether this implies that the number of layers $\ell$ at any moment in the algorithm is bounded by $O(\log n)$.
\end{proof}

\section{Signature Vectors}

To keep track of the progress made by the augmenting algorithm we design a potential function. For a given state of the alternating tree with layers $L_0,\dots,L_\ell$ in total we define the signature of layer $L_i$ (for $1 \leq i \leq \ell$) as:
\begin{equation}\label{eqn:signature}
\begin{split}
\text{signature}& \text{\;of \;} L_i := \left(-\lfloor \log_b \frac{(5r^2/\epsilon)^i }{(1-\mu)^{i-1}} |X_i| \rfloor, \lfloor \log_b \frac{(5r^2/\epsilon)^i}{(1-\mu)^i} |Y_i| \rfloor    \right),
\end{split}
\end{equation}
where $b := \frac{1}{1-\mu^3}$. The potential function associated with
the alternating tree at any state is the sequence obtained by
concatenating the signatures of the individual layers in order, and
finally appending the symbol $\infty$ at the end. We refer to this
potential function as the signature vector. In total there are
$2\ell+1$ coordinates in the signature vector which we write as
$s = (s_1,\dots,s_{2\ell},\infty)$.

% There are two parts to the polytime termination proof: A) termination and B) polytime.

\begin{lemma}\label{lem:decrease}
  The lexicographic value of the signature vector reduces across each iteration of the main loop in the augmenting algorithm unless the algorithm terminates during that iteration.
\end{lemma}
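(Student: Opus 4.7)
The plan is to track how the signature vector changes across one iteration of the main loop. Set $b := 1/(1-\mu^3)$, let $(L_0,\dots,L_\ell)$ be the state at the start of the iteration and $(L_0,\dots,L_t)$ the state at the end. Inspection of the collapse phase---step~(a) only removes edges from $Y_{\ell-1}$ for the current top $\ell$, and a superposed-build is invoked only on the layer immediately below the one just discarded---shows that $L_0,\dots,L_{t-1}$ are unchanged across the iteration. Therefore the first $2(t-1)$ coordinates of the new signature agree with those of the old one, and the required lex decrease must come either from coordinates introduced by a freshly added layer, or from the signature of $L_t$ itself.

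I split into three cases. If no collapse occurs, $t=\ell+1$; \prettyref{lem:addable} gives $|X_{\ell+1}|\geq 1$ and the non-collapsibility of $L_{\ell+1}$ gives $|Y_{\ell+1}|\geq 1$, so the two new signature coordinates are finite integers inserted in front of the trailing $\infty$, giving a lex decrease at position $2\ell+1$. Otherwise $t\leq \ell$, and the last while iteration of the collapse phase collapsed $L_{t+1}$, first shrinking $Y_t$ via step~(a) to $\tilde Y_t$ and then attempting a superposed-build on $L_t$. If that superposed-build succeeded, $|X_t|$ grew by a factor of at least $1+\mu$; a routine check that $(1+\mu)(1-\mu^3)>1$ for $\mu=\epsilon^2/(10r^2)$ gives $1+\mu>b$, so the first coordinate of $L_t$'s signature strictly decreases, finishing this sub-case.

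The main obstacle is the remaining sub-case, in which the superposed-build on $L_t$ failed: $X_t$ is unchanged and the lex decrease must come from showing $|\tilde Y_t|<(1-\mu^3)|Y_t|=|Y_t|/b$, so that the second coordinate of $L_t$'s signature drops by at least one. I would argue this as follows. The collapse condition on $L_{t+1}$ supplies more than $\mu|X_{t+1}|$ immediately addable edges, and the degree bound $U=\lceil 1/\mu\rceil$ limits each vertex of $A(Y_t)$ to at most $U$ such edges in $X_{t+1}$, so strictly more than $\mu|X_{t+1}|/U\geq \mu^2|X_{t+1}|/(1+\mu)$ distinct $A$-vertices admit an immediately addable edge, each contributing a unique removed edge of $Y_t$. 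To convert this into a fraction of $|Y_t|$ I rely on a bookkeeping observation: since $L_{t+1}$ exists in the tree just before the last while iteration discards it, it has not been collapsed since its most recent build, and hence neither $Y_t$ nor $Y_{\leq t}$ have been modified in that interval while $|X_{t+1}|$ has only grown. Applying \prettyref{lem:addable} at the moment of that build then yields $|X_{t+1}|>(\epsilon/(5r^2))|Y_{\leq t}|\geq (\epsilon/(5r^2))|Y_t|$ for the current values too. Substituting this together with $\mu=\epsilon^2/(10r^2)$ into $\mu^2|X_{t+1}|/(1+\mu)$ shows that the number of removals from $Y_t$ strictly exceeds $\mu^3|Y_t|$, completing the sub-case.
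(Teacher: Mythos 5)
Your proof is correct and follows essentially the same route as the paper: the same three-way case split (no collapse; collapse with a successful superposed-build giving a $(1+\mu)$ jump in $|X|$; collapse with a failed superposed-build giving a $(1-\mu^3)$ drop in $|Y|$), the same use of the degree bound $U$ to convert "more than $\mu|X_{t+1}|$ immediately addable edges" into "more than $\mu|X_{t+1}|/U$ removed matching edges," and the same invocation of Theorem~\ref{lem:addable} to bound $|X_{t+1}|$ from below by $(\epsilon/5r^2)|Y_t|$. The only differences are cosmetic: you index by the last surviving layer $t$ while the paper indexes by the earliest collapsed layer (its $t$ is your $t+1$), and you spell out explicitly the bookkeeping observation that $X_{t+1}$ only grows and $Y_{\leq t}$ is untouched between the build of $L_{t+1}$ and the final while iteration, which the paper leaves implicit in its proof of Theorem~\ref{lem:addable}.
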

\begin{proof}
  Suppose the alternating tree $T=(L_0,\dots,L_\ell)$ and the partial matching $M$ define the state of the algorithm at the beginning of the iteration. Let the signature of the corresponding alternating tree be $s=(s_1,\dots,s_{2\ell},\infty)$. We consider two cases depending on whether a collapse operation occurred during the collapse phase of the current iteration.
  
  \begin{itemize}
  \item \textbf{No collapse operation occurred.} $\;$ In this case only the build phase of the iteration modified the state of the algorithm by adding a new layer $L_{\ell+1}$. Thus, the new signature of the alternating tree is $s'=(s'_1,\dots,s'_{2\ell+2},\infty)$ where $s'_i = s_i$ for all $i \leq 2\ell$ and $(s'_{2\ell+1}, s'_{2\ell+2})$ are defined as in \eqref{eqn:signature} for layer $L_{\ell+1}$ at the beginning of the next iteration. Clearly the lexicographic value of the signature of the alternating tree has reduced.

  \item \textbf{At least one collapse operation occurred.} $\;$ This means that during the iteration a new layer $L_{\ell+1}$ was built, and one or more collapse operations occurred in the collapse phase. Let primed quantities denote the variables after the end of the collapse phase in the iteration. Suppose that $t$ ($\leq \ell+1$) is the index of the earliest layer that was collapsed among all the collapse operations in the collapse phase in the iteration. If $t=1$ then $a_0$ was matched and the algorithm terminates. Otherwise $t > 1$ and by the description of the algorithm, the only layers left in the alternating tree after the collapse phase are $L'_0,\dots,L'_{t-1}$ where $L'_i$ is identical to $L_i$ for all $i < t-1$. Thus the new signature after the collapse phase is $s'=(s'_1,\dots,s'_{2t-2},\infty)$ where $s'_i = s_i$ for all $i \leq 2t-4$ and,
\begin{equation*}
\begin{split}
(s'_{2t-3}, s'_{2t-2})  = \bigg( -\lfloor \log_b \frac{(5r^2/\epsilon)^i }{(1-\mu)^{i-1}} |X'_{t-1}| \rfloor,\lfloor \log_b \frac{(5r^2/\epsilon)^i}{(1-\mu)^i} |Y'_{t-1}| \rfloor    \bigg).
\end{split}
\end{equation*}
When layer $L_t$ was collapsed Step~\ref{step:simulate} of the collapse phase could have possibly modified layer $L_{t-1}$. Accordingly there are two subcases.

\begin{itemize}
\item $|X'_{t-1}| = |X_{t-1}|.$ Since there was no modification to $X_{t-1}$ we look at how $Y_{t-1}$ has changed. As we collapsed layer $L_t$ in the alternating tree, there must have been at least $\mu |X_t|$ immediately addable edges in $L_t$. These must have caused the removal of at least $\mu|X_t|/U$ many matching edges in $L_{t-1}$. Further, by Theorem~\ref{lem:addable}, $|X_t| > \epsilon/(5r^2) |Y_{t-1}|$. Together this means that 
\begin{align*}
|Y'_{t-1}| < (1-(\mu/U) \cdot \epsilon/(5r^2))|Y_{t-1}| < (1-\mu^3)|Y_{t-1}|.
\end{align*} By our choice of the base of the logarithm it holds that $\log_b \frac{(5r^2/\epsilon)^i}{(1-\mu)^i} |Y'_i| \leq \log_b (1-\mu^3) + \log_b \frac{(5r^2/\epsilon)^i}{(1-\mu)^i} |Y_i| = -1 + \log_b \frac{(5r^2/\epsilon)^i}{(1-\mu)^i} |Y_i|$. Therefore,  $s'_{2t-3} = s_{2t-3}$ whereas $s'_{2t-2} < s_{2t-2}$.

\item $|X'_{t-1}| \geq (1+\mu)|X_{t-1}|.$ In this subcase the fact that $(1+\mu) \geq b$ implies that the lexicographic value of the signature vector has reduced since $s'_{2t-3} < s_{2t-3}$.
\end{itemize}
  \end{itemize}
\end{proof}

To show that the augmenting algorithm terminates in polynomial time we need one more fact.

\begin{proposition}\label{lem:nondecreasing}
  The coordinates of the signature vector are non-decreasing in absolute value at the beginning of each iteration of the main loop of the augmenting algorithm.
\end{proposition}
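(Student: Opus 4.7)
The plan is to unpack the signature vector at coordinates $s_{2i-1}, s_{2i}, s_{2i+1}$ and verify two kinds of comparisons: the intra-layer inequality $|s_{2i-1}| \leq |s_{2i}|$ inside a single $L_i$, and the inter-layer inequality $|s_{2i}| \leq |s_{2i+1}|$ across consecutive layers. The terminal $\infty$ dominates everything else trivially. The point is that the specific exponents appearing in \eqref{eqn:signature} are tailored precisely so that these two comparisons reduce term by term to the two structural invariants already established, namely the corollary of Proposition~\ref{lem:notcollapsible} ($|Y_i| \geq (1-\mu)|X_i|$) and Theorem~\ref{lem:addable} ($|X_{i+1}| > (\epsilon/(5r^2))|Y_{\leq i}|$).

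First I fix notation. At the beginning of any iteration, the tree is $T=(L_0,\dots,L_\ell)$ and every $L_i$ with $i\geq 1$ satisfies $|X_i|,|Y_i|\geq 1$, so the arguments of both logarithms in \eqref{eqn:signature} are at least $1$ and the two floors are non-negative. In particular $s_{2i-1}\leq 0$ while $s_{2i}\geq 0$, so
\[|s_{2i-1}| = \lfloor \log_b \tfrac{(5r^2/\epsilon)^i}{(1-\mu)^{i-1}} |X_i| \rfloor, \qquad |s_{2i}| = \lfloor \log_b \tfrac{(5r^2/\epsilon)^i}{(1-\mu)^i} |Y_i| \rfloor.\]
The ratio of the two log-arguments is $|Y_i|/((1-\mu)|X_i|)$, which is at least $1$ by the corollary in Proposition~\ref{lem:notcollapsible}. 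Monotonicity of $\log_b$ and $\lfloor\cdot\rfloor$ then yields $|s_{2i}|\geq |s_{2i-1}|$.

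For the inter-layer step (assuming $i<\ell$), I want $|s_{2i}|\leq |s_{2i+1}|$, where $|s_{2i+1}| = \lfloor \log_b \tfrac{(5r^2/\epsilon)^{i+1}}{(1-\mu)^i}|X_{i+1}|\rfloor$. The ratio of the log-arguments is $(5r^2/\epsilon)\cdot|X_{i+1}|/|Y_i|$, which is $\geq 1$ exactly when $|X_{i+1}|\geq (\epsilon/(5r^2))|Y_i|$. Theorem~\ref{lem:addable} gives the stronger $|X_{i+1}| > (\epsilon/(5r^2))|Y_{\leq i}| \geq (\epsilon/(5r^2))|Y_i|$, so again by monotonicity $|s_{2i+1}|\geq |s_{2i}|$. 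The last coordinate $\infty$ dominates $|s_{2\ell}|$ by definition, which closes the chain.

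I do not anticipate a genuine obstacle: the construction of the signature is exactly engineered to make this proposition a direct consequence of the two earlier invariants. The only point that needs mild care is the edge case where a layer is so small that a floor collapses to zero, but this is harmless since we only need monotonicity, not strict inequality, and the sign structure $s_{2i-1}\leq 0\leq s_{2i}$ is preserved whenever $L_i$ is actually present in the tree with $|X_i|,|Y_i|\geq 1$.
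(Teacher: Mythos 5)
Your proof is correct and follows essentially the same approach as the paper: verify the intra-layer comparison via Proposition~\ref{lem:notcollapsible} and the inter-layer comparison via Theorem~\ref{lem:addable}. You just spell out the ratio-of-log-arguments computations and the sign pattern more explicitly than the paper does.
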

\begin{proof}
  Consider some layer $L_i$ for $1 \leq i \leq \ell$. Clearly the corresponding pair of coordinates $s_{2i-1},s_{2i}$ in the signature vector are non-decreasing in absolute value since $|Y_i| \geq (1-\mu)|X_i|$ using Proposition~\ref{lem:notcollapsible}. Between any two layers, by Theorem~\ref{lem:addable} we have $|X_i| > \epsilon/(5r^2) |Y_{i-1}|$ and so $|s_{2i-2}| \leq |s_{2i-1}|$. Thus the coordinates are non-decreasing in absolute value in the signature vector.
\end{proof}

\begin{lemma}\label{lem:sigs}
  The number of signature vectors is bounded by a polynomial in $n$.
\end{lemma}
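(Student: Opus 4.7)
The plan is to bound both the length of the signature vector and the range of each coordinate, and then to use the non-decreasing absolute-value property from Proposition~\ref{lem:nondecreasing} to count.

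First, I would observe that the signature vector has at most $2\ell+1$ coordinates, and by Lemma~\ref{lem:layers}, $\ell = O(\log n)$. Next, I would show that each coordinate $|s_j|$ is bounded by some $K = O(\log n)$. For this, note that $|X_i|, |Y_i| \leq m \leq n^r$, that $i \leq \ell = O(\log n)$, and that the prefactor $(5r^2/\epsilon)^i/(1-\mu)^i$ appearing in \eqref{eqn:signature} is at most $\bigl((5r^2/\epsilon)/(1-\mu)\bigr)^{O(\log n)}$. Since $\log b = -\log(1-\mu^3) = \Theta(\mu^3) = \Theta(\epsilon^6/r^6)$, dividing the (base-$e$) logarithm of the argument by $\log b$ inflates it only by a constant factor when $r$ and $\epsilon$ are treated as constants, giving $\lfloor \log_b (\cdot) \rfloor = O(\log n)$.

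Second, I would note that the sign of $s_j$ is fixed by its index. Because $|X_i|, |Y_i| \geq 1$ (which holds by Proposition~\ref{lem:notcollapsible}: a non-collapsible layer satisfies $|Y_i| \geq (1-\mu)|X_i| > 0$), the arguments of both $\log_b$ expressions in \eqref{eqn:signature} are at least $5r^2/\epsilon > b$, so $s_{2i-1}$ is strictly negative and $s_{2i}$ is non-negative. Hence each signature vector is uniquely determined by its length together with the sequence of absolute values, and by Proposition~\ref{lem:nondecreasing} this sequence is non-decreasing.

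Finally, I would count: the number of non-decreasing sequences of length at most $L$ drawn from $\{0, 1, \ldots, K\}$ equals $\binom{K+L+1}{L+1} \leq 2^{K+L+1}$. With $K, L = O(\log n)$, this bound is $n^{O(1)}$, completing the proof. The main obstacle will be the explicit verification of $|s_j| = O(\log n)$: the base $b$ is close to $1$, so $1/\log b$ is large, but it is only $O(r^6/\epsilon^6)$, which is a constant for fixed $r$ and $\epsilon$, so the bound ultimately goes through with constants depending only on $r$ and $1/\epsilon$.
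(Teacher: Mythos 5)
Your approach is essentially correct and, in fact, takes a cleaner route in the final counting step than the paper's own proof. The paper reduces to counting partitions of an integer of size $O(\log^2 n)$ and invokes the Hardy--Ramanujan asymptotic $c^{\sqrt{t}}$ for the partition function; you instead count non-decreasing sequences of length $O(\log n)$ with entries in $\{0,\dots,O(\log n)\}$ via a binomial coefficient, which is elementary and gives the same $n^{O(1)}$ bound. (This is precisely the simplification the paper attributes to one of its reviewers in the remark immediately following the proof; your version handles non-decreasing rather than strictly increasing entries, but the multiset count $\binom{K+L+1}{L}\le 2^{K+L+1}$ still works. Minor slip: the correct binomial is $\binom{K+L+1}{L}$, not $\binom{K+L+1}{L+1}$, though the $2^{K+L+1}$ bound is unaffected.) One genuine issue to fix: you bound $|X_i|,|Y_i|\le m\le n^r$, but $m\le n^r$ is not justified --- nothing in the paper bounds $|B|$ in terms of $n=|A|$, so $m$ can be far larger than $n^r$. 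The correct and simpler bound is $|Y_i|\le|M|\le n$ (a partial matching has at most one edge per $A$ vertex) and then $|X_i|\le|Y_i|/(1-\mu)=O(n)$ by Proposition~\ref{lem:notcollapsible}; with this replacement the rest of your estimate for $|s_j|=O(\log n)$ goes through unchanged.
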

\begin{proof}
  By Lemma~\ref{lem:layers} we know that the signature vector has at
  most $O(\log n)$ coordinates. By Proposition~\ref{lem:nondecreasing}
  the coordinates are also integers that are non-decreasing in
  absolute value. At this point one can obtain a trivial bound of
  $O(\log n)$ on the absolute value of each coordinate of the
  signature vector using the definition in
  \eqref{eqn:signature} and Lemma~\ref{lem:layers}. Since the sign
  pattern of the signature vector is always fixed, each signature
  vector can be thought to describe a unique partition of some
  positive integer of size at most $O(\log^2 n)$. Recall that a
  \emph{partition} of a positive integer $N$ is a way of writing $N$
  as the sum of positive integers without regard to order. Since the
  number of partitions of an integer of size $t$ is (asymptotically)
  at most $c^{\sqrt{t}}$ for some absolute constant
  $c > 1$~\cite{HardyRam18}, the claim then follows.
\end{proof}

As noticed by one of the reviewers, the dependence of
Lemma~\ref{lem:sigs} on the asymptotics of the partition function can
be avoided by modifying the signature vector to ensure that its
entries are strictly increasing in absolute value (instead of simply
being non-decreasing as in Lemma~\ref{lem:nondecreasing}), thereby
allowing a signature vector to be inferred by specifying a subset of a
set of size at most $O(\log n)$. One way to get this property is by
adding/subtracting $i$ to the $i$-th coordinate of the signature
vector, consistent with its sign pattern.

We are now in a position to use the potential function defined in this
section to wrap up the proof of our main result.

\begin{proof}[Proof of Theorem~\ref{thm:main}]
  From Lemma~\ref{lem:decrease} we have that every iteration of the main loop of the augmenting algorithm described in Section~\ref{section:augmenting} reduces the lexicographic value of the signature vector. Lemma~\ref{lem:sigs} further tells us that the number of such signature vectors is bounded by a polynomial in $n$. Thus, the augmenting algorithm terminates in polynomially many iterations. It can also be verified that each iteration of the augmenting algorithm can be implemented to run in time polynomial in $n$ and $m$. Finally, running the augmenting algorithm $n$ times, starting with an empty partial matching, yields the desired perfect matching in $H$.  \end{proof}

%%% Local Variables:
%%% mode: latex
%%% TeX-master: "polytime"
%%% End:

From the proof of Lemma~\ref{lem:addable1} we also note that the
algorithm in Section~\ref{section:augmenting} can be suitably modified
to yield the following
slightly stronger version of Theorem~\ref{thm:main} as a corollary.
\begin{theorem}\label{thm:maincorollary}
  For every fixed choice of $\epsilon >0$ and $r \geq 2$, there exists
  an algorithm $\mathcal{A}'(\epsilon, r)$ that takes as input an
  $r$-uniform bipartite hypergraph $H=(A,B,E)$, runs in polynomial
  time, and terminates after finding either:
  \begin{itemize}
  \item a perfect matching in $H$, or
  \item a set $S \subseteq A$ such that $\tau(E_S) \leq (2r-3+\epsilon)(|S|-1).$
  \end{itemize}
\end{theorem}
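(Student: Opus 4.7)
The plan is to observe that the entire proof of Theorem~\ref{thm:main} goes through without modification except that the strengthened Haxell condition $\tau(E_S) > (2r-3+\epsilon)(|S|-1)$ is invoked in exactly one place: at the end of the proofs of Lemma~\ref{lem:addable0} and Lemma~\ref{lem:addable1}, to derive a contradiction from an explicit upper bound on $\tau(E_S)$ for an explicitly constructed set $S \subseteq A$. Removing this final step turns those proofs into a constructive procedure that outputs $S$ whenever the build phase fails to produce the guarantee of Theorem~\ref{lem:addable}. The algorithm $\mathcal{A}'$ therefore coincides with $\mathcal{A}$ in its successful executions; the new capability is simply to detect failure and emit a certificate.

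Concretely, $\mathcal{A}'$ runs the augmenting procedure from Section~\ref{section:augmenting} and, at the end of each build phase producing layer $L_{\ell+1}$, tests whether $|X_{\ell+1}| > (\epsilon/(5r^2))|Y_{\leq \ell}|$. If this inequality holds, $\mathcal{A}'$ continues as $\mathcal{A}$. If it fails and $|Y_{\leq \ell}| < \lceil 5r^2/\epsilon\rceil$, $\mathcal{A}'$ returns the set $S$ of all $A$ vertices currently in the alternating tree; the counting in the proof of Lemma~\ref{lem:addable0} shows that the $B$ vertices of the tree form a hitting set for $E_S$ of size at most $(2r-3)(|S|-1)$, a fortiori at most $(2r-3+\epsilon)(|S|-1)$. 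If instead $|Y_{\leq \ell}| \geq \lceil 5r^2/\epsilon\rceil$, $\mathcal{A}'$ performs the superposed-build on each layer $L_i$ with $1 \leq i \leq \ell$ to compute the sets $S_i = A(X'_i \setminus X_i)$, and returns as $S$ those $A$ vertices in the tree that neither already have $U$ edges in $X_{\leq \ell+1}$ nor appear in $\bigcup_{i=1}^{\ell} S_i$. The hitting-set carried through the proof of Lemma~\ref{lem:addable1}, namely the $B$ vertices of the tree together with the $B$ vertices introduced by the $\ell$ superposed-build extensions, witnesses the bound $\tau(E_S) \leq (2r-3+\epsilon)(|S|-1)$ (the relevant algebra is exactly \eqref{eqn:smallHS}).

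For termination and correctness, whenever $\mathcal{A}'$ continues past a build phase, the invariants of Proposition~\ref{lem:notcollapsible}, Proposition~\ref{lem:noaugment} and the conclusion of Theorem~\ref{lem:addable} all hold by construction, so Lemma~\ref{lem:decrease} and Lemma~\ref{lem:sigs} apply verbatim: the signature vector strictly decreases each iteration and takes at most polynomially many distinct values, giving a polynomial bound on the iteration count of a single augmenting run. Running $\mathcal{A}'$ up to $n$ times, starting from the empty partial matching and trying to augment a new root $a_0$ each time, therefore terminates in polynomial time and either completes a perfect matching (if every run succeeds) or outputs a witnessing $S$ (at the first failure). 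The only part that requires care is checking that the constructed $S$ and associated hitting set can both be computed in polynomial time; this follows because the superposed-build invocation itself is a polynomial-time procedure (\textbf{BuildLayer}) and the hitting set is simply the collection of $B$ vertices actually produced by the tree and by these simulated builds.
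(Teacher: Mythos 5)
Your proposal is correct and takes essentially the same approach as the paper: the paper's own ``proof'' of Theorem~\ref{thm:maincorollary} is only the one-sentence remark preceding it, noting that the modification follows from the proof of Lemma~\ref{lem:addable1}, and you have simply unpacked what that modification must be. You are slightly more careful than the paper in one respect: you observe that both Lemma~\ref{lem:addable0} and Lemma~\ref{lem:addable1} invoke the strengthened Haxell condition, so both branches of the proof of Theorem~\ref{lem:addable} supply a candidate failure certificate $S$, whereas the paper's remark cites only Lemma~\ref{lem:addable1}. One small gloss: where you say ``the relevant algebra is exactly \eqref{eqn:smallHS}'', the witness bound actually needs \eqref{eqn:smallHS} \emph{together with} the lower bound $|S| \geq \left[1 - \epsilon^2/(2r^2)\right]|Y_{\leq \ell}|$ and the final numerical inequality of Lemma~\ref{lem:addable1}, which chain together to give $\tau(E_S) \leq (2r-3+\epsilon)(|S|-1)$ when $|Y_{\leq \ell}| \geq \lceil 5r^2/\epsilon\rceil$; your surrounding sentence makes clear you understand this, so this is a presentational point, not a gap.
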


\section{Conclusion and Open Problems}
In this paper we presented a polynomial time algorithm for finding
perfect matchings in bipartite hypergraphs satisfying a slightly
stronger version of Haxell's condition. The algorithm is essentially
the natural generalization of the well known Hungarian algorithm for
finding perfect matchings in graphs with two essential modifications:
i) restricting the degree of vertices in the constructed alternating
tree, and ii) performing updates on the alternating tree lazily. The
two ideas in tandem give us a polynomial running time bound on the
procedure.

One subtlety here is that the algorithm performs lazy updates in two
places, in Steps~\ref{step:lazy1} and~\ref{step:simulate}, in the
collapse phase. While the former is crucial for the running time
bound, the latter seems to be an artifact of the analysis driven by
the specific choice of the signature vector in
Section~\ref{section:analysis}. In particular, this can likely be
avoided by choosing a different signature vector to measure progress.

Finally, we point out the obvious open problem in this line of work.

\begin{question}
  Does there exist such an algorithm with a polynomial running time
  dependence on at least one of the parameters $1/\epsilon$ and $r$?
\end{question}

\section*{Acknowledgements}
We thank Yuri Faenza and Ola Svensson for providing helpful comments
on an earlier draft of this paper. We also thank anonymous SODA
reviewers for their valuable comments that helped improved the
presentation.

\bibliographystyle{alpha}
\bibliography{refs}

% \input{appendix}

% proof check

% citations check

% spell check

\end{document}